\let\doendproof\endproof
\renewcommand\endproof{~\hfill\qed\doendproof}
\DeclareMathOperator{\tw}{tw}
\newcommand{\SB}{\{\,} \newcommand{\SM}{\;{|}\;} \newcommand{\SE}{\,\}}
\renewcommand{\P}{\textsc{P}}
\newcommand{\NP}{\textsc{NP}}
\newcommand{\coNP}{\textsc{coNP}}
\newcommand{\containment}{\ensuremath{\mathsf{\NP\subseteq \coNP/poly}}}
\newcommand{\ProblemFormat}[1]{{\sc #1}}
\newcommand{\ProblemName}[1]{\ProblemFormat{#1}\xspace}
\DeclareMathOperator{\fall}{Fall}
\newcommand{\probThreeSAT}{\ProblemName{$3$-SAT}}
\newcommand{\probkCol}{\ProblemName{$k$-Coloring}}
\newcommand{\probThreeCol}{\ProblemName{$3$-Coloring}}
\newcommand{\probChromIndex}{\ProblemName{Edge $k$-Coloring}}
\newcommand{\probkFallCol}{\ProblemName{Fall $k$-Coloring}}
\newcommand{\probThreeFallCol}{\ProblemName{Fall $3$-Coloring}}
\newcommand{\probPlanarMonotoneThreeSat}{\ProblemName{Planar Monotone $3$-SAT}}
\newcommand{\probSetPart}{\ProblemName{Set Partition}}
\begin{document}
\title{Complexity of fall coloring for\\restricted graph classes}
%
%
\author{Juho Lauri\inst{1} \and Christodoulos Mitillos\inst{2}}
\authorrunning{J. Lauri and C. Mitillos}
%
\institute{Nokia Bell Labs, Dublin, Ireland\\\email{juho.lauri@gmail.com}\\
\and
Illinois Institute of Technology, Chicago, United States\\\email{cmitillo@iit.edu}}

\maketitle

\begin{abstract}
We strengthen a result by Laskar and Lyle (Discrete Appl.\ Math.\ (2009), 330--338) by proving that it is $\NP$-complete to decide whether a bipartite planar graph can be partitioned into three independent dominating sets.
In contrast, we show that this is always possible for every maximal outerplanar graph with at least three vertices.
Moreover, we extend their previous result by proving that deciding whether a bipartite graph can be partitioned into $k$ independent dominating sets is $\NP$-complete for every $k \geq 3$.
We also strengthen a result by Henning~{et al.}~(Discrete Math.\ (2009), 6451--6458) by showing that it is $\NP$-complete to determine if a graph has two disjoint independent dominating sets, even when the problem is restricted to triangle-free planar graphs.
Finally, for every $k \geq 3$, we show that there is some constant $t$ depending only on $k$ such that deciding whether a $k$-regular graph can be partitioned into $t$ independent dominating sets is $\NP$-complete.
We conclude by deriving moderately exponential-time algorithms for the problem.

\keywords{Fall coloring  \and Independent domination \and Computational complexity.}
\end{abstract}

\section{Introduction}
Domination and independence are two of the most fundamental and heavily-studied concepts in graph theory.
In particular, a partition of the vertices of a graph into independent sets is known as \emph{graph coloring} --- a central problem with several practical applications in e.g., scheduling~\cite{Marx2004}, timetabling, and seat planning~\cite{Lewis2015}.
In addition, independence and domination are central to various problems in telecommunications, such as adaptive clustering in distributed wireless networks and various channel assignment type problems such as code assignment, frequency assignment, and time-slot assignment.
For an overview, see~\cite[Chapter~30]{Resende2008}.

Let $G=(V,E)$ be a graph and let $S \subseteq V$ be a subset of its vertices.
Here, $S$ is an \emph{independent set} if the vertices in $S$ are pairwise non-adjacent.
We say that $S$ is a \emph{dominating set} when every vertex of $V$ either is in $S$ or is adjacent to a vertex in $S$.
Combining these properties, Dunbar~{et al.}~\cite{Dunbar2000} studied the problem of partitioning the vertex set of a graph into sets that are both independent and dominating.
The authors viewed this problem as a kind of a graph coloring defined as follows.
Let $\Pi = \{V_1,V_2,\ldots,V_k\}$ be a partition of $V$.
We say that a vertex $v \in V_i$ for $i \in \{1,2\ldots,k\}$ is \emph{colorful} if $v$ is adjacent to at least one vertex in each color class $V_j$ for $i \neq j$.
$\Pi$ is a \emph{fall $k$-coloring} if each $V_i$ is independent and every vertex $v \in V$ is colorful.
Informally, in a fall coloring, each vertex has in its immediate neighborhood each of the colors except for its own.
For an illustration of the concept, see Figure~\ref{fig:fk345}.
For possible applications of fall $k$-coloring, including transceiver frequency allocation and timetabling, see~\cite[Section~4.2]{Mitillos2016}.

The maximum $k$ for which a graph $G$ has a fall $k$-coloring is known as the \emph{fall achromatic number}, denoted by $\psi_{fall}(G)$.
Clearly, we have $\psi_{fall}(G) \leq \delta(G) + 1$, where $\delta(G)$ is the minimum degree of $G$.
Similarly, the minimum $k$ for which a graph $G$ has a fall $k$-coloring is known as the \emph{fall chromatic number}, denoted by $\chi_{fall}(G)$.
Here, it holds that $\chi(G) \leq \chi_{fall}(G)$, where $\chi(G)$ is the chromatic number of $G$ (see~\cite{Dunbar2000}).
The \emph{fall set} of a graph $G$, denoted by $\fall(G)$, is the set of integers $k$ such that $G$ admits a fall $k$-coloring.
In general, $\fall(G)$ is not guaranteed to be non-empty, but it is finite for finite graphs.
For example, we have that $\fall(C_6) = \{ \chi_{fall}(C_6), \psi_{fall}(C_6) \} = \{ 2,3 \}$ with a fall 3-coloring shown in Figure~\ref{fig:fk345}.
To obtain a fall 2-coloring for $C_6$, it suffices to observe that any 2-coloring of a connected bipartite graph is a fall 2-coloring. 

In this work, our focus is on the computational complexity of fall coloring.
In this context, it was shown by Heggernes and Telle~\cite{Heggernes1998} that for every $k \geq 3$, it is $\NP$-complete to decide whether a given graph $G$ has $k \in \fall(G)$.
Laskar and Lyle~\cite{Laskar2009} improved on this in the case of $k=3$ by showing that it is $\NP$-complete to decide whether a given bipartite graph $H$ has $3 \in \fall(H)$.
On a positive side, it was shown by Telle and Proskurowski~\cite{Telle1997} that deciding whether $k \in \fall(G)$ can be done in polynomial time when $G$ has bounded cliquewidth (or  treewidth).
This can also be derived from the fact that the property of being fall $k$-colorable can be expressed in monadic second order logic (for details, see~\cite[Section~7.4]{fpt-book}).
For chordal graphs $G$, it is known that the fall set is either empty or contains exactly $\delta(G)+1$.
To the best of our knowledge, the complexity of deciding this case is open.
For subclasses of chordal graphs, the fall sets of threshold and split graphs can be characterized in polynomial time~\cite{Mitillos2016}.
Despite independence and domination being central concepts in graph theory, we are unaware of any further hardness results for fall coloring (see also e.g.,~\cite[Section~7]{Goddard2013}).

We extend and strengthen previous hardness results for fall coloring, and provide new results as follows:
\begin{itemize}
\item In Section~\ref{sec:planar-bip}, we extend the result of Laskar and Lyle~\cite{Laskar2009} by proving that for $k \geq 3$, it is $\NP$-complete to decide whether a bipartite graph is fall $k$-colorable.
Further, for the case of $k = 3$, we strengthen their result considerably by showing it is $\NP$-complete to decide whether a bipartite \emph{planar} graph is fall 3-colorable.

If we do not insist on a partition, we prove that deciding whether a triangle-free planar graph contains two disjoint independent dominating sets is $\NP$-complete, strengthening the result of Henning~{et~al.}~\cite{Henning2009} who only showed it for general graphs.

\item In Section~\ref{sec:regular}, we turn our attention to regular graphs. While fall coloring 2-regular graphs is easy, we prove that for every $k \geq 3$, there is some $t$ --- dependant only on $k$ --- such that it is $\NP$-complete to decide whether a $k$-regular graph $G$ is fall $t$-colorable (see the section for precise statements).

\item In Section~\ref{sec:algo}, we conclude by detailing some further algorithmic consequences of our hardness results presented in Section~\ref{sec:planar-bip}.
In addition, we derive moderately exponential-time algorithms for fall coloring.
\end{itemize}

\section{Preliminaries}
For a positive integer $n$, we write $[n] = \{1,2,\ldots,n\}$.
All graphs we consider in this work are undirected and finite.

\paragraph{Graph theory}
Let $G=(V,E)$ be a graph.
For any $s \geq 1$, we denote by $G^s$ the \emph{$s$th power} of $G$, which is $G$ with edges added between every two vertices at a distance no more than~$s$.
In particular, $G^2$ is called the \emph{square} of $G$.
By $G^{\frac{1}{s}}$, we mean $G$ with each of its edges subdivided $s-1$ times.

A \emph{$k$-coloring} of a graph $G$ is a function $c : V \to [k]$. 
A \emph{coloring} is a $k$-coloring for some $k \leq |V|$. 
We say that a coloring $c$ is \emph{proper} if $c(u)\neq c(v)$ for every edge $uv \in E$. 
In particular, if $G$ admits a proper $k$-coloring, we say that $G$ is \emph{$k$-colorable}.
The \emph{chromatic number} of $G$, denoted by $\chi(G)$, is the smallest $k$ such that $G$ is $k$-colorable.

\paragraph{Computational problems} 
The problem of deciding whether a given graph $G$ has $\chi(G) \leq k$ is $\NP$-complete for every $k \geq 3$ (see e.g.,~\cite{Garey1976}).
We refer to this computational problem as \probkCol.
In a closely related problem known as \probChromIndex, the task is to decide whether the \emph{edges} of the input graph can be assigned $k$ colors such that every two adjacent edges receive a distinct color. 
Similarly, for every $k \geq 3$, this problem is also $\NP$-complete even when the input graph is $k$-regular as shown by Leven and Galil~\cite{Leven1983}.

Our focus is on the following problem and its computational complexity.
\begin{framed}
\vspace{-0.2cm}
\noindent \probkFallCol \\
\textbf{Instance:} A graph $G=(V,E)$. \\
\textbf{Question:} Can $V$ be partitioned into $k$ independent dominating sets, i.e., is $k \in \fall(G)$?
\vspace{-0.2cm}
\end{framed}

\section{Hardness results for planar and bipartite graphs}
\label{sec:planar-bip}

In this section, we prove that deciding whether a bipartite planar graph can be fall 3-colored is $\NP$-complete.
Moreover, we show that for every $k \geq 3$, it is $\NP$-complete to decide whether a bipartite graph can be fall $k$-colored.

We begin with the following construction that will be useful to us throughout the section.
\begin{lemma}
\label{lem:3col-to-3fall}
\probThreeCol reduces in polynomial-time to \probThreeFallCol.
\end{lemma}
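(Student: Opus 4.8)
The plan is to give a gadget reduction that augments the input graph so that every vertex of the original graph is automatically \emph{colorful} under any proper $3$-coloring, while not constraining how $G$ itself may be colored. Recall that, by definition, a fall $3$-coloring is a proper $3$-coloring (each color class independent) in which every vertex is colorful, i.e.\ adjacent to a vertex of each of the two colors different from its own; in particular every vertex must have degree at least $2$, a condition an arbitrary proper $3$-coloring need not meet. The whole difficulty is therefore to force colorfulness without altering $3$-colorability.

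Given an instance $G=(V,E)$ of \probThreeCol, I would build $G'$ as follows: keep $G$, and for every vertex $v \in V$ introduce two fresh vertices $v_1,v_2$ together with the edges $vv_1$, $vv_2$, and $v_1v_2$, so that $\{v,v_1,v_2\}$ induces a triangle attached to the rest of $G'$ only through $v$. This adds $2\lvert V\rvert$ vertices and $3\lvert V\rvert$ edges, so $G'$ is computable in polynomial time, and $G$ is an \emph{induced} subgraph of $G'$ since no edge is added between two vertices of $V$.

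For correctness, I would argue both directions. First, suppose $G'$ admits a fall $3$-coloring $c'$. As a fall $3$-coloring is in particular proper and $G$ is induced in $G'$, the restriction $c'|_V$ is a proper $3$-coloring of $G$, so $G$ is $3$-colorable. Conversely, suppose $c$ is a proper $3$-coloring of $G$ using colors $\{1,2,3\}$. I extend $c$ to $G'$ by assigning to $v_1$ and $v_2$ the two colors in $\{1,2,3\}\setminus\{c(v)\}$, one each. The result is proper, since within each triangle the three vertices receive all three colors and the gadget vertices have no further neighbors. It is also colorful everywhere: each original $v$ now sees both colors of $\{1,2,3\}\setminus\{c(v)\}$ on $v_1,v_2$, and each gadget vertex $v_i$, having degree $2$ with two neighbors of distinct colors both different from its own, sees the remaining two colors. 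Hence the extension is a fall $3$-coloring and $3\in\fall(G')$.

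The one point that needs care is checking that the gadget vertices are themselves colorful: this is exactly where a naive pendant vertex would fail, since a degree-$1$ vertex can never witness two foreign colors. A triangle is the minimal structure that simultaneously raises every attached vertex to degree $2$, supplies $v$ with both missing colors in \emph{every} proper coloring, and leaves its own two new vertices trivially colorful, so I expect no obstacle beyond this verification. Note finally that the construction introduces triangles and hence preserves neither bipartiteness nor planarity, which is consistent with the lemma being stated for general graphs, the structure-preserving refinements being deferred to the later hardness results.
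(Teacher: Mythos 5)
Your reduction is correct, but it is genuinely different from the one in the paper. You attach a pendant triangle $\{v,v_1,v_2\}$ to each vertex and keep the edges of $G$ intact, whereas the paper first subdivides every edge $uv$ (introducing a degree-$2$ vertex $x_{uv}$) and then identifies each original vertex with a vertex of a $6$-cycle. Both constructions achieve the same two goals --- forcing every vertex to be colorful without constraining how $G$ itself is colored --- and both correctness arguments go through; indeed your backward direction is cleaner, since you simply restrict a proper coloring to an induced subgraph, while the paper must argue via the degree-$2$ subdivision vertices that $c'(u)\neq c'(v)$ for each original edge. What the paper's heavier gadget buys is structure preservation: after subdivision the graph is bipartite, the $C_6$ attachment keeps it so, and planarity and bounded degree are preserved as well, which is exactly what is exploited in the subsequent corollary on bipartite planar graphs of maximum degree~$6$ and in the generalization to $F_k = K_2\times K_k$ for $k\geq 4$. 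Your triangle gadget cannot be reused there. One small slip in your closing remark: attaching a pendant triangle to each vertex \emph{does} preserve planarity (the triangle can be drawn inside any face incident to $v$); what it destroys is only bipartiteness and triangle-freeness. None of this affects the validity of your proof of the lemma as stated.
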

\begin{proof}
Let $G$ be an instance of \probThreeCol. 
In polynomial time, we will create the following instance $G'$ of \probThreeFallCol, such that $G$ is 3-colorable if and only if $G'$ is fall 3-colorable.

The graph $G'=(V',E')$ is obtained from $G$ by subdividing each edge, and by identifying each vertex in $V$ with a copy of $C_6$.
Formally, we let $$V' = V \cup \{ x_{uv} \mid uv \in E \} \cup \{ w_{vi} \mid v \in V, i \in [5] \}, \text{ and}$$ $$E' = \{ ux_{uv}, vx_{uv} \mid uv \in E \} \cup \{ vw_{v1}, vw_{v5}, w_{vi}w_{vi+1} \mid v \in V, i \in [4] \}.$$
This finishes the construction of $G'$.

Let $c : V \to [3]$ be a proper vertex-coloring of $G$, and let us construct a fall 3-coloring $c' : V' \to [3]$ as follows.
We retain the coloring of the vertices in $V$, that is, $c'(v) = c(v)$ for every $v \in V$.
Then, as the degree of each $x_{uv}$ is two, it holds that in any valid fall 3-coloring of $G'$, the colors from $[3]$ must be bijectively mapped to the closed neighborhood $\{ u,v,x_{uv}\}$ of $v_{uv}$.
Thus, we set $c'(x_{uv}) = f$, where $f$ is the unique color in $[3]$ neither $c(u)$ nor $c(v)$.
Finally, consider an arbitrary vertex $v \in V$. 
Without loss of generality, suppose $c(v) = 1$. 
We will then finish the vertex-coloring $c'$ as follows (see Figure~\ref{fig:fk345}, where $C_6 \simeq F_3$):
\[ 
 c'(w_{v3}) = 1\,,\;
 c'(w_{v1}) = 2\,,\;
 c'(w_{v4}) = 2\,,\;
 c'(w_{v2}) = 3\,,\;
 c'(w_{v5}) = 3\,.
\]
It is straightforward to verify that $c'$ is indeed a fall 3-coloring of $G'$.

For the other direction, let $c'$ be a fall 3-coloring of $G'$.
Again, because the degree of each $x_{uv}$ is two, it holds that $c'(u) \neq c'(v)$.
Therefore, $c'$ restricted to $G$ is a proper 3-coloring for $G$.
This concludes the proof.
\end{proof}
\noindent Combining the previous lemma with the well-known fact that deciding whether a planar graph of maximum degree~4 can be properly 3-colored is $\NP$-complete~\cite{Garey1976}, we obtain the following.
\begin{corollary}
\label{cor:bip-planar-npc}
It is $\NP$-complete to decide whether a bipartite planar graph $G$ of maximum degree~6 is fall 3-colorable.
\end{corollary}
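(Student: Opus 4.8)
The plan is to obtain the result as a direct instantiation of Lemma~\ref{lem:3col-to-3fall}, choosing the source instances carefully so that the output graph $G'$ inherits all three desired structural properties. Concretely, I would start from an instance $G$ of \probThreeCol in which $G$ is \emph{planar} and has \emph{maximum degree~$4$}; deciding $3$-colorability is $\NP$-complete already on this restricted family by Garey, Johnson, and Stockmeyer~\cite{Garey1976}, as quoted just before the statement. Feeding such a $G$ into the polynomial-time construction of the lemma yields $G'$ with the guarantee that $G$ is $3$-colorable if and only if $G'$ is fall $3$-colorable. It then remains only to verify that $G'$ is bipartite, planar, and of maximum degree~$6$, and that \probThreeFallCol belongs to $\NP$.

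Membership in $\NP$ is immediate: a partition of $V'$ into three color classes is a certificate of polynomial size, and checking that each class is independent and that every vertex is colorful can be done in polynomial time. For the three structural invariants I would reason as follows. First, \emph{planarity}: subdividing edges never destroys planarity, and each $C_6$ gadget attaches to the rest of $G'$ at the single vertex $v$, so it can be drawn inside any face incident to $v$ without crossings; hence $G'$ is planar whenever $G$ is. Second, \emph{bipartiteness}: subdivision turns every original edge into a path of length~$2$, so each cycle of $G$ becomes an even cycle in $G'$, while each gadget is itself an even cycle ($C_6$) meeting the rest of the graph only at $v$; therefore $G'$ has no odd cycle. (Equivalently, placing $V$ together with $w_{v2},w_{v4}$ on one side and the subdivision vertices $x_{uv}$ together with $w_{v1},w_{v3},w_{v5}$ on the other gives an explicit bipartition.) Third, \emph{degree}: subdivision leaves $\deg(v)$ unchanged, while the gadget adds exactly the two neighbors $w_{v1},w_{v5}$, so an original vertex has degree $\deg_G(v)+2$; every subdivision vertex and every gadget vertex $w_{vi}$ has degree~$2$. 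Thus the maximum degree of $G'$ equals $\Delta(G)+2 \le 6$.

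Putting these together, $G \mapsto G'$ is a polynomial-time reduction from $3$-colorability of planar maximum-degree-$4$ graphs to \probThreeFallCol restricted to bipartite planar graphs of maximum degree~$6$, and correctness of the equivalence is exactly what Lemma~\ref{lem:3col-to-3fall} supplies. I do not expect any genuinely hard step here; the entire content lies in confirming that the three invariants survive the reduction. If there is a subtlety to flag, it is the degree bookkeeping: the ``$+2$'' contributed by the gadget attachment is precisely what forces the target bound up to~$6$ and explains why the source family is chosen with maximum degree~$4$ rather than the more customary degree~$3$ or~$5$.
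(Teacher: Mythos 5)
Your proposal is correct and follows essentially the same route as the paper: instantiate Lemma~\ref{lem:3col-to-3fall} on planar maximum-degree-$4$ instances of \probThreeCol and check that the construction preserves planarity, yields a bipartite graph, and raises the degree of each original vertex by exactly~$2$ (to at most~$6$) while all new vertices have degree~$2$. Your explicit bipartition and the remark on the degree bookkeeping are slightly more detailed than the paper's one-line verification, but the argument is identical in substance.
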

\begin{proof}
It suffices to observe that the construction of Lemma~\ref{lem:3col-to-3fall} does not break planarity (i.e., if $G$ is planar, so is $G'$) and that after subdividing the edges of $G$ the resulting graph $G'$ is bipartite.
Finally, a vertex $v$ of degree $\Delta \leq 4$ in $G$ has degree $\Delta + 2 \leq 6$ in $G'$ after $v$ is identified with a copy of $C_6$, whereas the new vertices (in copies of $C_6$ or from subdividing) have degree~$2$.
\end{proof}

In order to show that fall $k$-coloring is hard for every $k \geq 3$ for the class of bipartite graphs, we make use of the following construction. As a reminder, $G \times H$ is the \emph{categorical product} of graphs $G$ and $H$ with $V(G \times H) = V(G) \times V(H)$ and $(u_1, v_1)(u_2, v_2) \in E(G \times H)$ when $u_1u_2 \in E(G)$ and $v_1v_2 \in E(H)$.
\begin{proposition}
\label{prop:fk-fallkcol}
For every $k \geq 3$, the graph $F_k = K_2 \times K_k$ is bipartite and uniquely fall $k$-colorable.
\end{proposition}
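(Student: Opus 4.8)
The plan is to realize $F_k$ concretely and then separate the straightforward existence part from the harder uniqueness part. Writing $V(K_2) = \{0,1\}$ and $V(K_k) = [k]$, the definition of the categorical product makes $(a,i)$ and $(b,j)$ adjacent exactly when $a \neq b$ and $i \neq j$; thus $F_k$ has no edges inside $A := \{(0,i) : i \in [k]\}$ or inside $B := \{(1,j) : j \in [k]\}$, so it is the crown graph on $2k$ vertices and $(A,B)$ is a bipartition, giving bipartiteness immediately. For existence I would exhibit the coloring $c(a,i) = i$: each class is the pair $\{(0,i),(1,i)\}$, which is independent (the two vertices share their second coordinate, hence are non-adjacent) and colorful, since $(0,i)$ sees every $(1,j)$ with $j \neq i$ and therefore all colors in $[k] \setminus \{i\}$, and symmetrically for $(1,i)$. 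This is also the candidate for the unique coloring.

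For uniqueness the key structural step is to understand the independent sets of $F_k$. I would first prove that any independent set $S$ meeting both sides satisfies $|S \cap A| \leq 1$ and $|S \cap B| \leq 1$: if $S$ contained two vertices $(0,i),(0,i')$ of $A$ with $i \neq i'$, then any $(1,j) \in S$ would have to be non-adjacent to both, forcing $j = i$ and $j = i'$ simultaneously, which is impossible, so $S \subseteq A$; the symmetric statement rules out two vertices of $B$. Consequently an independent set that meets both $A$ and $B$ is exactly a same-index pair $\{(0,i),(1,i)\}$, while every other independent set is a subset of $A$ or a subset of $B$.

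It then remains to rule out one-sided color classes, and this is where I expect the real work to lie, because $A$ and $B$ are themselves independent and independence alone does not forbid them. Here I would invoke the domination requirement: if a color class $C$ is contained in $A$, then every $(0,i) \in A \setminus C$ has all its neighbors in $B$ and hence is undominated, forcing $C = A$; but then the remaining $k-1$ classes partition the $k$ vertices of $B$, and since all classes of a fall coloring are nonempty, a count with $k \geq 3$ shows some class is a singleton $\{(1,i)\}$, which fails to dominate $(0,i)$ --- a contradiction. By symmetry no class lies inside $B$ either. Hence every color class meets both sides and is therefore a same-index pair $\{(0,i),(1,i)\}$; as the $k$ such pairs already partition $V(F_k)$, the partition into color classes is forced, so $F_k$ is uniquely fall $k$-colorable. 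The main obstacle is precisely this step separating the independence constraint from the full fall constraint: one must combine domination with a pigeonhole count on the side $B$ to eliminate the large one-sided independent sets that an independence-only analysis would otherwise permit.
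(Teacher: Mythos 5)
Your proof is correct, and it takes a genuinely different route from the paper: the paper disposes of the proposition entirely by citation, invoking bipartiteness of categorical products with a bipartite factor, a theorem of Dunbar~et~al.\ for the existence of a fall $k$-coloring of $K_s \times K_k$, and a separate theorem of Mitillos for uniqueness. You instead give a self-contained elementary argument, identifying $F_k$ as the crown graph $K_{k,k}$ minus a perfect matching and classifying its independent sets: any independent set meeting both sides is a matched pair $\{(0,i),(1,i)\}$, and a one-sided color class is eliminated by combining the domination requirement (which forces such a class to be all of $A$ or all of $B$) with the pigeonhole count showing that $k-1$ nonempty classes on the remaining $k$ vertices must contain a singleton $\{(1,i)\}$ that fails to dominate $(0,i)$. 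All steps check out, including the colorfulness verification for the explicit coloring $c(a,i)=i$ and the observation that the $k$ matched pairs exhaust the vertex set, so the partition is forced. What your approach buys is transparency and an explicit description of the unique fall coloring --- precisely the structural fact that the paper later relies on in its Lemma~\ref{lem:kcol-to-kfall}, where the unique coloring of an attached copy of $F_k$ is propagated from a single precolored vertex --- at the cost of a longer argument than the paper's two citations.
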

\begin{proof}
It is well-known that $G \times H$ is bipartite if either $G$ or $H$ is bipartite.
Thus, as $K_2$ is bipartite, so is $F_k$.

It follows from  Dunbar~{et al.}~\cite[Theorem~6]{Dunbar2000} that if $s$ and $k$ are distinct positive integers both greater than one, then $K_s \times K_k$ has a fall $k$-coloring. 
In our case, $s = 2$ and $k \geq 3$, so $F_k$ admits a fall $k$-coloring.
The fact that $F_k$ has a unique fall $k$-coloring follows from~\cite[Theorem~15]{Mitillos2016}.
\end{proof}

\begin{figure}[t]
    \centering
    \begin{minipage}{0.33\textwidth}
        \centering
        \includegraphics[width=0.9\textwidth]{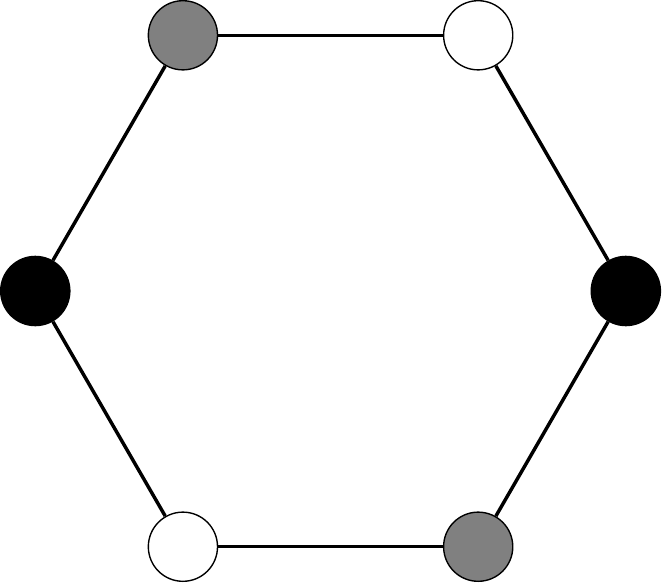} 
    \end{minipage}\hfill
    \begin{minipage}{0.33\textwidth}
        \centering
        \includegraphics[width=0.9\textwidth]{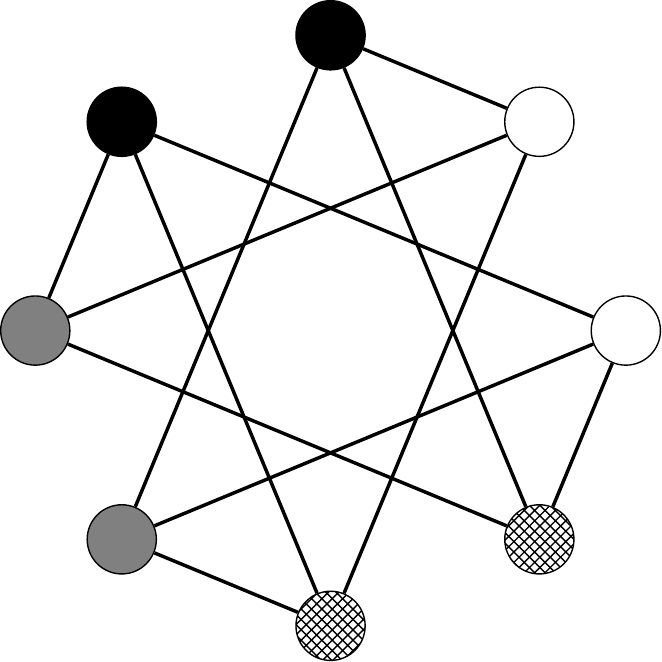} 
    \end{minipage}\hfill
    \begin{minipage}{0.33\textwidth}
        \centering
        \includegraphics[width=0.9\textwidth]{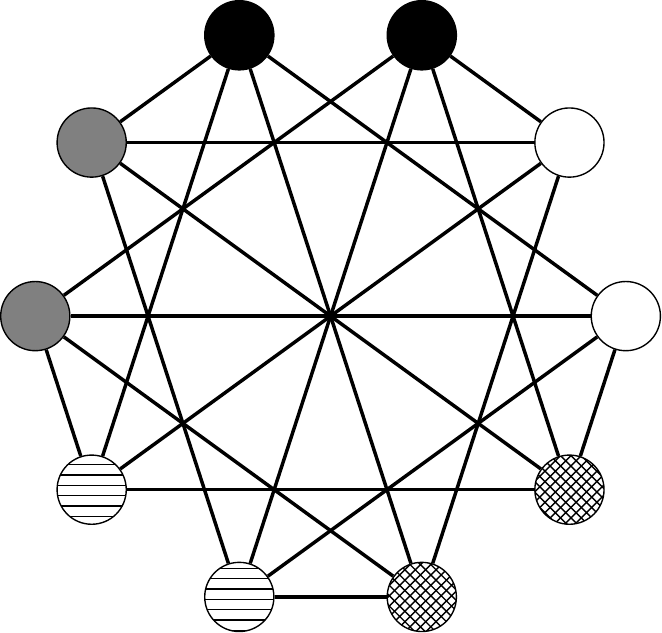}     
    \end{minipage}\hfill
    \caption{The graphs $F_k = K_2 \times K_k$ for $3 \leq k \leq 5$ each with a fall $k$-coloring shown.}
    \label{fig:fk345}
    \vspace{-0.5cm}
\end{figure}

\noindent We are then ready to proceed with the reduction, following the idea of Lemma~\ref{lem:3col-to-3fall}.
\begin{lemma}
\label{lem:kcol-to-kfall}
For every $k \geq 4$, it is $\NP$-complete to decide whether a bipartite graph $G$ is fall $k$-colorable.
\end{lemma}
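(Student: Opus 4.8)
The plan is to reduce from \probkCol, which is \NP-complete for every $k \geq 3$ and in particular for $k \geq 4$, generalizing the construction of Lemma~\ref{lem:3col-to-3fall}. Membership in \NP is immediate, since a claimed fall $k$-coloring is a certificate of linear size whose validity (independence of each class and colorfulness of each vertex) can be checked in polynomial time. Given an instance $G=(V,E)$ of \probkCol, I would build a bipartite graph $G'$ as follows. First, subdivide every edge $uv \in E$ once, introducing a vertex $x_{uv}$ adjacent to $u$ and $v$; this step alone already makes the graph bipartite, since subdividing each edge doubles the length of every cycle. The difficulty compared with the case $k=3$ is that a subdivision vertex now has degree $2 < k-1$ and therefore cannot be colorful in any fall $k$-coloring. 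To repair this, I would attach to each $x_{uv}$ exactly $k-3$ new pendant vertices $p^{uv}_1,\dots,p^{uv}_{k-3}$, so that $x_{uv}$ has degree precisely $k-1$. Finally, to guarantee that the remaining vertices can be made colorful, I would identify each original vertex $v \in V$ and each pendant $p^{uv}_i$ with a single vertex of its own fresh copy of $F_k = K_2 \times K_k$. By Proposition~\ref{prop:fk-fallkcol} each such $F_k$ is bipartite and uniquely fall $k$-colorable, and since all these copies are glued at cut vertices, $G'$ remains bipartite; the whole construction has size polynomial in $|V|+|E|$ for every fixed $k$.

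The engine of the reduction is the behaviour of the tight vertices $x_{uv}$. In any fall $k$-coloring $c'$ of $G'$, the vertex $x_{uv}$ is colorful and has degree exactly $k-1$, so its $k-1$ neighbours must receive the $k-1$ distinct colors of $[k]\setminus\{c'(x_{uv})\}$; in particular $c'(u)\neq c'(v)$. Hence the restriction of $c'$ to $V$ is a proper $k$-coloring of $G$, which establishes the backward direction.

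For the forward direction I would start from a proper $k$-coloring $c$ of $G$ and extend it. Set $c'(v)=c(v)$ on $V$, and use the unique fall $k$-coloring of each attached $F_k$, relabeled so that the identified vertex receives its prescribed color (possible since color labels may be permuted freely), so that every $v$ and every pendant becomes colorful through its attached $F_k$. For each $x_{uv}$, since $c(u)\neq c(v)$ I would pick any color $f\in[k]\setminus\{c(u),c(v)\}$ for $x_{uv}$ and distribute the $k-3$ colors of $[k]\setminus\{f,c(u),c(v)\}$ bijectively among $p^{uv}_1,\dots,p^{uv}_{k-3}$; then the $k-1$ neighbours of $x_{uv}$ realize all colors of $[k]\setminus\{f\}$, making $x_{uv}$ colorful, and all adjacencies stay properly colored. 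Checking that every vertex of $G'$ has degree at least $k-1$ and is colorful then shows that $c'$ is a fall $k$-coloring.

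The main obstacle, and the only real departure from the $k=3$ argument of Lemma~\ref{lem:3col-to-3fall}, is the treatment of the subdivision vertices: for $k\geq 4$ they can no longer serve simultaneously as colorful vertices and as constraint enforcers, so the padding to degree exactly $k-1$ is essential. Keeping $x_{uv}$ tight is precisely what turns colorfulness into the rainbow condition on its neighbourhood and thereby transmits the proper-coloring constraint, while the copies of $F_k$ supply colorfulness everywhere else without introducing any odd cycle or over-boosting the degree of $x_{uv}$.
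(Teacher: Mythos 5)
Your proposal is correct and follows essentially the same route as the paper: subdivide each edge, pad each subdivision vertex $x_{uv}$ up to degree exactly $k-1$ with $k-3$ attachment points each carrying its own copy of $F_k$, hang a copy of $F_k$ on each original vertex, and use the unique fall $k$-colorability of $F_k$ (Proposition~\ref{prop:fk-fallkcol}) to complete the coloring in the forward direction while the tightness of $\deg(x_{uv})=k-1$ forces $c'(u)\neq c'(v)$ in the backward direction. Your phrasing of the padding (pendants later identified with $F_k$-vertices) is just a cosmetic restatement of the paper's "make one vertex of each new copy of $F_k$ adjacent to $x_{uv}$," so no substantive difference remains.
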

\begin{proof}
We show this by extending the method in Lemma~\ref{lem:3col-to-3fall}. Given a graph $G = (V, E)$, we construct in polynomial time a bipartite graph $G' = (V', E')$, so that $G'$ is fall $k$-colorable if and only if $G$ is $k$-colorable. Then, since it is $\NP$-complete to decide whether $G$ is $k$-colorable, the result will follow.

As before, we begin by subdividing every edge of $G$ once, and identifying each vertex of $V$ with a copy of $F_k$. Then, for each vertex $x_{uv}$ created by subdividing some edge $uv$ of $G$, we create $k - 3$ disjoint copies of $F_k$, and arbitrarily select one vertex in each such copy to make adjacent to $x_{uv}$. Note that when $k = 3$, this simplifies to the construction in Lemma~\ref{lem:3col-to-3fall}.

First, we observe that the resulting graph $G'$ is bipartite. It consists of one copy of $G^{\frac{1}{2}}$ and multiple disjoint copies of $F_k = K_2 \times K_k$, connected to $G^{\frac{1}{2}}$ by either cut-vertices or cut-edges. Since $G^{\frac{1}{2}}$ and $F_k$ are both bipartite, $G'$ is bipartite as well.

Let $c$ be a proper $k$-coloring of $G$. We extend it to a fall $k$-coloring $c'$ of $G'$ as follows. For every edge $uv \in E$, the vertex $x_{uv}$ is colored arbitrarily with some color distinct from both $c(u)$ and $c(v)$. Then, its remaining $k - 3$ neighbors are each given a different color, so that $x_{uv}$ is colorful. Now every copy of $F_k$ in the graph has exactly one colored vertex; since $F_k$ has a unique fall $k$-coloring (up to isomorphism) by Proposition~\ref{prop:fk-fallkcol}, we use this to complete $c'$.

For the other direction, let $c'$ be a fall $k$-coloring of $G'$. Then, since each $x_{uv}$ has $k - 1$ neighbours and is colorful, $c'(u) \neq c'(v)$. Restricting $c'$ to $V$, we obtain a proper $k$-coloring of $G$.
\end{proof}

\begin{theorem}
For every $k \geq 3$, it is $\NP$-complete to decide whether a bipartite graph $G$ is fall $k$-colorable.
\end{theorem}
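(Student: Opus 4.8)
The plan is to combine the two hardness results established immediately above, since together they already cover the full range $k \geq 3$. First I would verify membership in $\NP$: given a candidate partition $\Pi = \{V_1, \ldots, V_k\}$ of $V$, one can check in polynomial time that each $V_i$ is independent (by scanning all edges) and that every vertex $v \in V_i$ is colorful (by checking, for each $j \neq i$, that $v$ has a neighbor in $V_j$). Both verifications run in time polynomial in $|V| + |E|$, so \probkFallCol is in $\NP$ for every fixed $k$.

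For hardness I would split into two cases according to the value of $k$. When $k = 3$, the desired $\NP$-hardness follows immediately from Corollary~\ref{cor:bip-planar-npc}, which in fact establishes the stronger statement that the problem remains hard even on bipartite \emph{planar} graphs of bounded maximum degree; restricting attention to the larger class of all bipartite graphs only weakens the hypothesis, so hardness is inherited. When $k \geq 4$, the hardness is precisely the content of Lemma~\ref{lem:kcol-to-kfall}, obtained by reducing from \probkCol via the $F_k$-gadget construction. Since every integer $k \geq 3$ falls into exactly one of these two cases, and each case supplies $\NP$-hardness over the bipartite graphs, the theorem follows.

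I expect no genuine obstacle here, as all the substantive work --- designing the $C_6$ and $F_k$ gadgets, verifying that the reductions preserve bipartiteness, and establishing both directions of the equivalence between proper $k$-colorings of $G$ and fall $k$-colorings of $G'$ --- has already been carried out in the preceding lemmas and proposition. The only mild point worth double-checking is that the two cases are stated over the same graph class, which they are: Corollary~\ref{cor:bip-planar-npc} produces bipartite instances, and Lemma~\ref{lem:kcol-to-kfall} produces bipartite instances by construction, so the union of the two cases yields a single uniform $\NP$-completeness statement for bipartite graphs across all $k \geq 3$.
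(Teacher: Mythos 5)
Your proposal is correct and follows essentially the same route as the paper, which likewise obtains the theorem by combining the $k=3$ case (Lemma~\ref{lem:3col-to-3fall}, of which Corollary~\ref{cor:bip-planar-npc} is the bipartite-planar consequence you cite) with Lemma~\ref{lem:kcol-to-kfall} for $k \geq 4$. The explicit $\NP$-membership check you include is a harmless addition that the paper leaves implicit.
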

\begin{proof}
The proof follows by combining Lemma~\ref{lem:3col-to-3fall} with Lemma~\ref{lem:kcol-to-kfall}.
\end{proof}
\noindent We also observe the following slightly stronger corollary.
\begin{corollary}
For every $k \geq 3$, it is $\NP$-complete to decide whether a bipartite graph $G$ of maximum degree $3(k-1)$ is fall $k$-colorable.
\end{corollary}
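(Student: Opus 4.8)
The plan is to re-run the reduction from \probkCol underlying Lemma~\ref{lem:kcol-to-kfall} (which specialises to Lemma~\ref{lem:3col-to-3fall} when $k = 3$) without any change, and merely to bound the degrees of the produced bipartite graph $G'$ in terms of the maximum degree $\Delta(G)$ of the input. Recall that $F_k = K_2 \times K_k$ is $(k-1)$-regular and that, in its unique fall $k$-colouring, every vertex already sees each of the $k-1$ colours other than its own among its $k-1$ neighbours. First I would sort the vertices of $G'$ by type and read off their degrees: a subdivision vertex $x_{uv}$ is adjacent to $u$, to $v$, and to one selected vertex in each of its $k-3$ private copies of $F_k$, so $\deg_{G'}(x_{uv}) = 2 + (k-3) = k-1$; a vertex lying in a copy of $F_k$ and not identified with any $v \in V$ has degree $k-1$, except the one vertex attached to some $x_{uv}$, which has degree $k$. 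Both values are at most $3(k-1)$ for every $k \ge 3$, so neither type is binding.

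The only vertices whose degree depends on $G$ are the original ones: a vertex $v \in V$ is incident to its $\deg_G(v)$ subdivision neighbours and, through the copy of $F_k$ with which it is identified, to $k-1$ further neighbours, giving $\deg_{G'}(v) = \deg_G(v) + (k-1)$. Hence $\Delta(G') = \max\{\,\Delta(G) + (k-1),\, k\,\}$, and it is enough to feed the reduction with instances of \probkCol of maximum degree at most $2(k-1)$: then $\Delta(G') \le 2(k-1) + (k-1) = 3(k-1)$, while bipartiteness and the equivalence that $G$ is $k$-colourable if and only if $G'$ is fall $k$-colourable are inherited verbatim from Lemma~\ref{lem:kcol-to-kfall}.

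The one genuinely non-routine point, which I expect to be the main obstacle, is supplying such a bounded-degree source of hardness, i.e.\ the fact that \probkCol remains $\NP$-complete for graphs of maximum degree at most $2(k-1)$. For $k = 3$ this is exactly the maximum-degree-$4$ hardness of \probThreeCol already invoked in Corollary~\ref{cor:bip-planar-npc}~\cite{Garey1976}. For $k \ge 4$ one appeals to a bounded-degree $k$-colouring hardness result: it is known that $k$-colourability stays $\NP$-complete for graphs whose maximum degree is linear in $k$, and in particular bounded by $2(k-1)$. The logic is then immediate, since any hardness established for a degree bound $d(k) \le 2(k-1)$ transfers upward: every such hard instance is also an instance of maximum degree at most $2(k-1)$, and membership in $\NP$ is trivial, so the restriction to maximum degree $2(k-1)$ inherits $\NP$-hardness. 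With this ingredient in hand the corollary follows at once from the degree bookkeeping above.
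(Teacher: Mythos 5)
Your proposal is correct and follows essentially the same route as the paper: the same reduction, the same degree bookkeeping giving $\Delta(G') = \Delta(G) + (k-1)$, and the same appeal to $\NP$-hardness of \probkCol{} at maximum degree $2k-2$. The only thing you leave as an assertion---that this bounded-degree hardness is known for all $k \ge 3$---is exactly what the paper pins down with a citation to Maffray and Preissmann, so there is no gap in substance.
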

\begin{proof}
We use Lemma~\ref{lem:3col-to-3fall} and Lemma~\ref{lem:kcol-to-kfall} with the fact that deciding whether a graph $G$ has $\chi(G) \leq k$ is $\NP$-complete for every $k \geq 3$ even when $G$ has maximum degree $\Delta = 2k-2$ (see~\cite[Theorem~3]{Maffray1996}).
Now, $F_k = K_2 \times K_k$ is $(k-1)$-regular, so a vertex of degree $\Delta$ in $G$ has degree $\Delta + k - 1 = 3k-3$ in $G'$. At the same time, the new vertices (from subdividing, or copies of $F_k$) have degree at most $k < 3k - 3$.
The claim follows.
\end{proof}

After Corollary~\ref{cor:bip-planar-npc}, it is natural to wonder what are the weakest additional constraints to place on the structure of a planar graph so that say fall 3-coloring is solvable in polynomial time.
In the following, we show that maximal outerplanar graphs with at least three vertices admit a fall 3-coloring, and in fact no other fall colorings.
We begin with the following two propositions; for short proofs of both we refer the reader to~\cite{Mitillos2016}.
\begin{proposition}
\label{prop:chordal-fall}
Let $G$ be a chordal graph. Then either $\fall(G) = \emptyset$ or $\fall(G) = \{ \delta(G) + 1 \}$.
\end{proposition}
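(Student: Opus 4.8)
The plan is to show that \emph{every} fall coloring of a chordal graph $G$ uses exactly $\delta(G)+1$ colors; since $\fall(G)$ is then contained in the singleton $\{\delta(G)+1\}$, the statement follows immediately. I would obtain this by squeezing the number of colors between two matching bounds. The upper bound is free from the general inequality recalled above: any fall $k$-coloring satisfies $k \le \psi_{fall}(G) \le \delta(G)+1$, and this needs no chordality.

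For the matching lower bound I would exploit the defining feature of chordal graphs via Dirac's theorem: $G$ has a simplicial vertex $v$, so that $N[v]$ induces a clique of size $\deg(v)+1$. From a hypothetical fall $k$-coloring $c$ I would extract two facts about this single vertex. First, $c$ is in particular a proper coloring (each color class is independent), and a clique of size $\deg(v)+1$ forces all its vertices to receive distinct colors, so $k \ge \deg(v)+1$. Second, the colorful condition at $v$ requires a neighbor in each of the $k-1$ color classes other than $c(v)$, and these must be $k-1$ distinct neighbors, so $\deg(v) \ge k-1$, i.e.\ $k \le \deg(v)+1$. Together these pin the count: $k = \deg(v)+1$.

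It then remains to relate $\deg(v)$ to the minimum degree. Since $v$ is a vertex of $G$ we have $\deg(v) \ge \delta(G)$, whence $k = \deg(v)+1 \ge \delta(G)+1$; combined with the upper bound $k \le \delta(G)+1$ this forces $k = \delta(G)+1$ (and incidentally shows the chosen simplicial vertex attains the minimum degree). Thus any element of $\fall(G)$ equals $\delta(G)+1$, so $\fall(G)$ is either empty or $\{\delta(G)+1\}$. I expect the only real content to be the observation that one simplicial vertex simultaneously forces $k \ge \deg(v)+1$ through its clique and $k \le \deg(v)+1$ through colorfulness; the two points to handle with care are that fall colorings are genuinely proper, so that the clique argument applies, and that a simplicial vertex exists at all, which is precisely what chordality guarantees.
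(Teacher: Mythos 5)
Your proof is correct and complete: the colorfulness condition at a minimum-degree vertex gives $k \le \delta(G)+1$ for any fall $k$-coloring, and a simplicial vertex $v$ (which exists by chordality) gives $k \ge |N[v]| = \deg(v)+1 \ge \delta(G)+1$ since each color class is independent and $N[v]$ is a clique, so $\fall(G) \subseteq \{\delta(G)+1\}$. The paper does not reproduce a proof of this proposition but defers to~\cite{Mitillos2016}; your simplicial-vertex argument is the standard short proof of this fact, so there is nothing to flag.
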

\begin{proposition}
\label{prop:uniq-col-fall}
If $G$ is a uniquely $k$-colorable graph, then $G$ is fall $k$-colorable.
\end{proposition}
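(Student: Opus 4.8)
The plan is to argue by contradiction directly from the definitions. Recall that a uniquely $k$-colorable graph $G$ has $\chi(G) = k$ and admits exactly one partition of $V$ into $k$ independent sets, up to permuting the color names; write this partition as $\{V_1, \ldots, V_k\}$. Since each $V_i$ is independent, this coloring already satisfies the first requirement of a fall coloring, so the entire argument reduces to verifying that every vertex is \emph{colorful}.

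First I would suppose, for contradiction, that some vertex $v$ is not colorful. By definition this means that, writing $v \in V_i$, there is a color class $V_j$ with $j \neq i$ such that $v$ has no neighbor in $V_j$. The idea is then to recolor $v$ with color $j$: set $V_i' = V_i \setminus \{v\}$ and $V_j' = V_j \cup \{v\}$, leaving every other class unchanged. Because $v$ has no neighbor in $V_j$, the set $V_j'$ is still independent, and $V_i'$ remains independent trivially, so $\{V_1, \ldots, V_i', \ldots, V_j', \ldots, V_k\}$ is again a proper coloring of $G$.

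Next I would split into two cases according to whether $v$ was alone in its class. If $V_i = \{v\}$, then the recolored partition uses only $k-1$ nonempty classes, giving a proper $(k-1)$-coloring of $G$ and contradicting $\chi(G) = k$. Otherwise $|V_i| \geq 2$, so all $k$ classes remain nonempty after the move and we obtain a genuine second $k$-coloring; this contradicts unique $k$-colorability, provided we confirm that the new coloring is not merely a relabeling of the original. Either way we reach a contradiction, so no such $v$ exists and the unique coloring is a fall $k$-coloring.

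I expect the only real subtlety — the ``hard part'' — to be this last verification in the second case, namely checking that moving $v$ yields an essentially different coloring rather than a permutation of the original. Here one uses the uniqueness hypothesis carefully: since $V_i$ still contains a vertex $u \neq v$ keeping color $i$ while $v$ has switched to color $j$, any color permutation carrying the old coloring to the new one would have to fix $i$ (on account of $u$) yet simultaneously send $i$ to $j$ (on account of $v$), which is impossible for $i \neq j$. Hence the two colorings induce distinct $k$-partitions, completing the contradiction and establishing the proposition.
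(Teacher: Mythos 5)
Your proof is correct; the paper does not reproduce its own proof of this proposition (it defers to the cited reference \cite{Mitillos2016}), but the argument there is exactly this recoloring one: a non-colorful vertex can be moved into the color class it misses, yielding either a proper $(k-1)$-coloring or a genuinely different $k$-partition, contradicting unique $k$-colorability either way. Your verification that the modified coloring is not merely a relabeling of the original is the one point that actually needs checking, and you handle it correctly.
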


\noindent These results will be combined with the following theorem.

\begin{theorem}[Chartrand and Geller~\cite{Chartrand1969}]
\label{thm:outerplanar-unique3}
An outerplanar graph $G$ with at least three vertices is uniquely 3-colorable if and only if it is maximal outerplanar.
\end{theorem}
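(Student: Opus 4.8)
The plan is to prove the two directions separately, using an edge-counting argument for the forward direction (unique $3$-colorability $\Rightarrow$ maximality) and an inductive ear-removal argument for the reverse direction. The quantitative fact tying everything together is that both ``outerplanar'' and ``uniquely $3$-colorable'' pin down the number of edges to exactly $2n-3$ from opposite sides, where $n = |V(G)|$.

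First, for the direction ``uniquely $3$-colorable $\Rightarrow$ maximal outerplanar'', I would recall the classical bound that every outerplanar graph on $n$ vertices has at most $2n-3$ edges, with equality exactly for maximal outerplanar graphs. Then I would establish the complementary lower bound for uniquely $3$-colorable graphs. The heart of this is the standard lemma that in a uniquely $k$-colorable graph, the bipartite subgraph induced by any two color classes is connected: if it had two or more components, recoloring one component by interchanging its two colors would yield a second, genuinely distinct proper $3$-coloring, contradicting uniqueness. Writing $n_1, n_2, n_3$ for the (nonempty) sizes of the three classes, connectivity forces at least $n_i + n_j - 1$ edges between classes $i$ and $j$, so summing over the three pairs gives at least $2(n_1+n_2+n_3) - 3 = 2n - 3$ edges. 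Combining this with the outerplanar upper bound forces equality, hence $G$ is maximal outerplanar.

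For the converse ``maximal outerplanar $\Rightarrow$ uniquely $3$-colorable'', I would argue by induction on $n$, exploiting the fact that a maximal outerplanar graph with $n \geq 3$ vertices is a triangulation of a polygon and therefore possesses an \emph{ear}: a degree-$2$ vertex $v$ whose two neighbors $u, w$ are adjacent. The base case $n = 3$ is $K_3$, which is trivially uniquely $3$-colorable. For the inductive step, deleting $v$ yields a maximal outerplanar graph $G'$ on $n-1 \geq 3$ vertices, which is uniquely $3$-colorable by the inductive hypothesis; since $u$ and $w$ are adjacent they receive distinct colors in this unique coloring, so $v$ is forced to take the third color. This extension is both proper and uniquely determined, which establishes that $G$ is uniquely $3$-colorable.

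The main obstacle I anticipate lies less in either individual argument and more in marshalling the standard structural facts cleanly: the sharp outerplanar edge bound together with its equality characterization, the connectivity-of-two-color-classes lemma, and the existence of an ear together with the claim that ear removal preserves maximal outerplanarity. Each of these is well known, so the effort is chiefly in stating them precisely and verifying the boundary cases rather than in any single difficult step.
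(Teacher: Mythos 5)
The paper does not prove this statement at all: it is imported verbatim from Chartrand and Geller~\cite{Chartrand1969} and used as a black box, so there is no ``paper proof'' to compare against. Judged on its own, your argument is correct and complete in outline. The forward direction is the classical edge-counting proof: the lemma that in a uniquely $k$-colorable graph the subgraph induced by any two color classes is connected gives $|E(G)| \geq \sum_{i<j}(n_i+n_j-1) = 2n-3$, and the outerplanar upper bound $|E(G)| \leq 2n-3$ forces equality, whence no edge can be added without violating outerplanarity. The reverse direction via ear removal in a polygon triangulation is the standard induction and works as stated. Two small points deserve explicit care in a write-up: (i) you should fix the convention that ``uniquely $3$-colorable'' means a unique partition into three independent sets with all three classes nonempty (otherwise the connectivity lemma and the sum $2n-3$ need a sentence justifying that no class is empty when $n \geq 3$); and (ii) the equality case of the outerplanar edge bound --- that an outerplanar graph with exactly $2n-3$ edges is maximal outerplanar --- is immediate from the bound itself, but the companion fact that every maximal outerplanar graph on $n\geq 3$ vertices is a $2$-connected polygon triangulation (hence has an ear whose removal preserves maximal outerplanarity) is the one structural lemma you should either cite or prove, since the induction rests on it. With those facts in hand, your proof is sound and arguably more informative than the paper's bare citation.
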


\noindent The claimed result is now obtained as follows.

\begin{theorem}
Let $G$ be a maximal outerplanar graph with at least three vertices. Then $\fall(G) = \{ 3 \}$.
\end{theorem}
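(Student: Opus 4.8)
The plan is to assemble the three facts just stated, the only real work being to verify two standard structural properties of maximal outerplanar graphs: that they are chordal, and that they have minimum degree exactly $2$.

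First I would argue that $G$ is chordal. A maximal outerplanar graph on $n \geq 3$ vertices is precisely a triangulation of a convex polygon, so its outer boundary is a Hamiltonian cycle and every internal face is a triangle. Consequently any cycle of length at least four must possess a chord, which is exactly the definition of chordality. Hence Proposition~\ref{prop:chordal-fall} applies to $G$, and $\fall(G)$ is either empty or a singleton of the form $\{\delta(G)+1\}$.

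Next I would pin down $\delta(G)$. Every triangulation of a polygon with at least three vertices contains an \emph{ear} — a triangle bounded by two consecutive boundary edges — whose apex has degree exactly $2$; in fact there are always at least two such ears. Since a maximal outerplanar graph on $n \geq 3$ vertices is $2$-connected, no vertex can have degree below $2$, so $\delta(G) = 2$ and therefore $\delta(G)+1 = 3$. Substituting this into the conclusion of Proposition~\ref{prop:chordal-fall} yields that either $\fall(G) = \emptyset$ or $\fall(G) = \{3\}$.

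It then only remains to exclude the empty case, which is where the remaining ingredients enter. By Theorem~\ref{thm:outerplanar-unique3}, the maximal outerplanar graph $G$ is uniquely $3$-colorable, and Proposition~\ref{prop:uniq-col-fall} immediately gives that $G$ is fall $3$-colorable, i.e.\ $3 \in \fall(G)$. Combining this with the dichotomy from the previous step forces $\fall(G) = \{3\}$, as claimed. I do not expect any genuine obstacle here: the whole argument is a short composition of cited results, and the only care required is in checking the two elementary claims — chordality and $\delta(G) = 2$ — about polygon triangulations.
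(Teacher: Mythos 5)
Your proof is correct and follows essentially the same route as the paper's: chordality plus Proposition~\ref{prop:chordal-fall} gives the dichotomy, $\delta(G)=2$ pins down the candidate value, and Theorem~\ref{thm:outerplanar-unique3} with Proposition~\ref{prop:uniq-col-fall} rules out the empty case. You merely spell out the two elementary structural facts (chordality and minimum degree~2) in more detail than the paper does.
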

\begin{proof}
As every maximal outerplanar graph $G$ is chordal, it follows by Proposition~\ref{prop:chordal-fall} that $\fall(G) = \emptyset$ or $\fall(G) = \{ \delta(G) + 1 \}$.
It is well-known that every maximal outerplanar graph has at least two vertices of degree two.
By combining Theorem~\ref{thm:outerplanar-unique3} with Proposition~\ref{prop:uniq-col-fall}, we have that $\fall(G) = \{ \delta(G) + 1 \} = \{ 3 \}$.
\end{proof}

Also, in the light of Corollary~\ref{cor:bip-planar-npc}, it should be recalled that any proper 2-coloring of a connected bipartite graph is a fall 2-coloring.
Thus, the statement of Corollary~\ref{cor:bip-planar-npc} would not hold for the case of $k = 2$ colors (unless $\P = \NP$).
However, what if we do not insist on a partition of the vertices but are merely interested in the existence of two disjoint independent dominating sets?
As we will show, this problem is $\NP$-complete for planar graphs; in fact even those that are triangle-free.
This result is a considerable strengthening of an earlier result of Henning~{et~al.}~\cite{Henning2009}, who showed it only for general graphs.

\begin{figure}[t]
    \centering
        \includegraphics[scale=0.75,keepaspectratio]{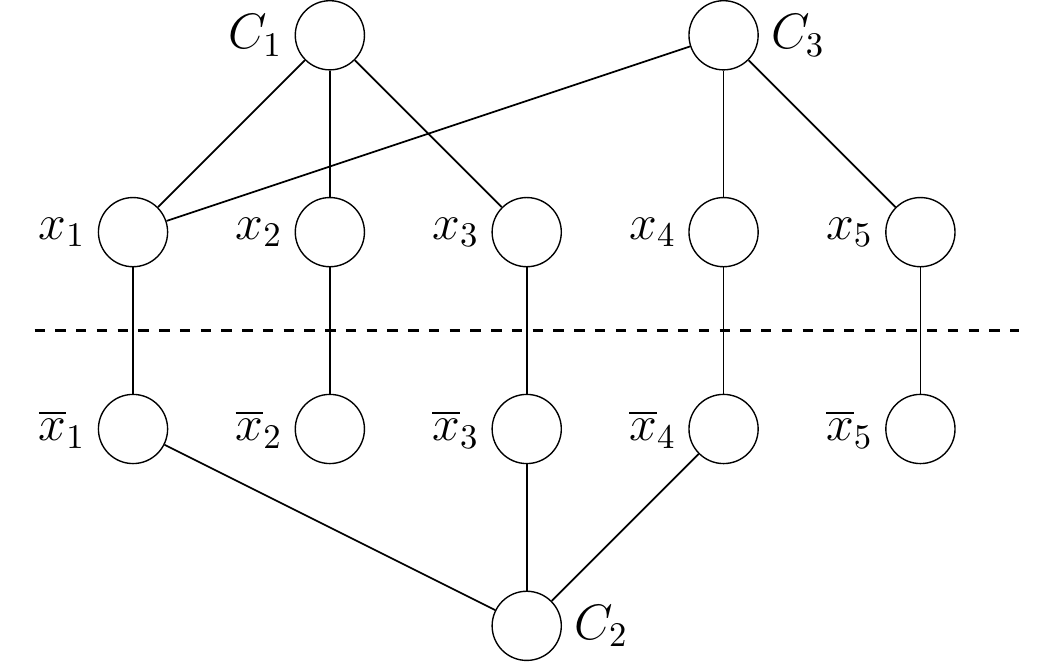} 
    \caption{An instance $\varphi = (x_1 \vee x_2 \vee x_3) \wedge (\overline{x}_1 \vee \overline{x}_3 \vee \overline{x}_4) \wedge (x_1 \vee x_4 \vee x_5)$ of \probPlanarMonotoneThreeSat, which always admits a planar drawing $G(\varphi)$. Conceptually, the dashed horizontal line separates the upper part containing all positive literals and clauses from the lower part containing all negative literals and clauses.}
    \label{fig:monotone-sat}
    \vspace{-0.5cm}
\end{figure}

In the \probPlanarMonotoneThreeSat problem, we are given a 3-SAT formula $\varphi$ with $m$ clauses over $n$ variables $x_1$, $x_2, \ldots x_n$, where each clause $c_1$, $c_2, \ldots, c_m$ comprises either three positive literals or three negative literals.
We call such clauses \emph{positive} and \emph{negative}, respectively.
Moreover, the associated graph $G(\varphi)$ has a 2-clique (i.e., an edge) $\{ x_i, \overline{x}_i \}$ for each variable $x_i$, a vertex for each $c_j$, and an edge between a literal contained in a clause and the corresponding clause.
In particular, $G(\varphi)$ admits a planar drawing such that every 2-clique sits on a horizontal line with the line intersecting their edges.
In addition, every positive clause is placed above the line, while every negative clause is placed below the line (see Figure~\ref{fig:monotone-sat}).
The fact that \probPlanarMonotoneThreeSat is $\NP$-complete and that $G(\varphi)$ admits the claimed planar drawing follows from {de Berg and Khosravi}~\cite{DeBerg2012}.

\begin{theorem}
It is $\NP$-complete to decide whether a given triangle-free planar graph has two disjoint independent dominating sets.
\end{theorem}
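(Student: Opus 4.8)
The plan is to reduce from \probPlanarMonotoneThreeSat, whose hardness and convenient planar layered drawing $G(\varphi)$ were recalled above. Given $\varphi$, I would build a graph $H$ and argue that $H$ has two disjoint independent dominating sets if and only if $\varphi$ is satisfiable. A useful first observation is that $G(\varphi)$ is already triangle-free: a triangle would need two adjacent literal vertices with a common clause neighbour, but two literal vertices are adjacent only as a complementary pair $\{x_i,\overline{x}_i\}$, and no monotone clause contains both. Thus the task reduces to decorating $G(\varphi)$ with gadgets that, first, force a Boolean assignment and, second, test each clause, all while preserving planarity and triangle-freeness. I would place every gadget inside the faces of the fixed drawing and subdivide edges wherever a short cycle would otherwise appear, so that planarity and triangle-freeness are maintained essentially for free.

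For the variable gadget I would attach a pendant (degree-one) vertex to each of $x_i$ and $\overline{x}_i$. The key forcing fact, which is easy to verify, is that if $u$ is a pendant with unique neighbour $v$, then in any two disjoint independent dominating sets $R,B$ one of $u,v$ lies in $R$ and the other in $B$: indeed $u$ must be dominated by both sets and its only dominator is $v$. Consequently each of $x_i,\overline{x}_i$ is coloured, and since they are adjacent they receive different colours, so exactly one of them lies in $R$. Reading \emph{$x_i$ is true} as \emph{$x_i\in R$} therefore yields a consistent truth assignment. Because the whole configuration is symmetric under swapping $R$ and $B$, this assignment is only defined up to global complementation, which is harmless: in the backward direction I would simply fix the naming of the two sets and read off the assignment relative to it. The clause gadget would then be joined to the three literal vertices of each clause through subdivided edges, routed above the separating line for positive clauses and below it for negative clauses exactly as in the drawing of $G(\varphi)$, so that the disjunction of a clause is witnessed by the domination of a single clause vertex in the colour coding \emph{true}.

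The two directions of correctness should then fall out: from a satisfying assignment I would colour the literal vertices accordingly, complete the pendants and subdivision vertices in the forced way, and check that the resulting $R,B$ are independent and dominate every vertex, a satisfied clause supplying the missing colour at its clause vertex; conversely, from any two disjoint independent dominating sets the variable gadgets read out an assignment and the clause-vertex domination certifies that every clause is satisfied. The main obstacle is precisely the design of the clause gadget. The naive gadget, in which the clause vertex is merely made adjacent to its literals, fails because domination is a \emph{global} property and a clause vertex can always rescue itself: when every literal of a clause is false, the clause vertex may simply join the opposite set and dominate itself there while the false literals supply domination in the other colour, so an unsatisfied clause goes undetected. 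Closing this self-domination loophole without instead imposing a \emph{not-all-equal} constraint --- which would be problematic, since not-all-equal satisfiability is tractable on planar instances --- is the delicate point. I expect the fix to force the clause vertex into the colour that codes \emph{false}, so that it dominates itself on that side and genuinely requires a \emph{true} literal to be dominated on the other side, while taking care that a false literal cannot spuriously dominate the clause vertex (which is where the parity introduced by the subdivisions, and possibly an auxiliary helper per clause, must be set correctly) and that the gadget remains triangle-free and planar within its face.
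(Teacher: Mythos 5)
You have the right starting point --- a reduction from \probPlanarMonotoneThreeSat, attention to planarity and triangle-freeness, and the correct pendant-forcing observation --- but the proof is incomplete at exactly the point you yourself flag as delicate: no working clause gadget is ever exhibited, and correctness is deferred to a construction that is only conjectured to exist. Worse, the encoding you commit to (attach a pendant to each of $x_i$ and $\overline{x}_i$ so that \emph{both} literal vertices are forced into $R\cup B$, and read ``true'' as membership in $R$) appears structurally doomed rather than merely unfinished. The pair $(R,B)$ carries a global swap symmetry, and every local gadget is indifferent to which of the two sets is named $R$; hence any locally enforceable constraint on the colours of the three literal vertices of a clause is invariant under complementing the assignment, which is precisely the not-all-equal trap you mention. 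In your scheme a false literal is a \emph{coloured} vertex (it lies in $B$), so it can dominate its clause vertex no matter what, and the subdivision parity you appeal to does not rescue this: a single subdivision vertex $s$ between a literal and a clause vertex that both lie in $B$ is forced into $R$ (its closed neighbourhood must meet $R$ and its two neighbours cannot be in $R$), and then $s$ dominates the clause vertex in $R$ for free, voiding the test.

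The paper escapes this by making the truth encoding swap-invariant: truth is coded by \emph{being coloured at all}, not by which colour. Each edge $\{x_i,\overline{x}_i\}$ is replaced by a $5$-cycle $x_i,a_i,b_i,c_i,\overline{x}_i$ with a pendant attached to each of $a_i,b_i,c_i$ and \emph{no} pendants on the literal vertices. The pendants force $b_i$ into one set and $a_i,c_i$ both into the other, whence neither literal can enter the set containing $a_i$ and $c_i$, and, being adjacent to each other, at most one of $x_i,\overline{x}_i$ lies in $I_1\cup I_2$. Clause vertices are then kept directly adjacent to their literals with no further gadgetry: since a clause vertex can self-dominate in only one of the two disjoint sets, its (purely literal) open neighbourhood must meet the other set, i.e.\ some literal of the clause is selected, while unselected (false) literals are uncoloured and so cannot spuriously dominate anything. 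To complete your argument you would need either to adopt a gadget of this kind or to show that your colour-identity encoding can break the $R\leftrightarrow B$ symmetry globally, which I do not believe it can.
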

\begin{proof}
The proof is by a polynomial-time reduction from \probPlanarMonotoneThreeSat, whose input is a monotone 3-SAT instance $\varphi$ with a set of $m$ clauses $\mathcal{C} = \{ C_1, C_2, \ldots, C_m \}$ over the $n$ variables $\mathcal{X} = \{ x_1, x_2, \ldots, x_n \}$.
Since our goal is to construct a graph $G'$ that is both triangle-free and planar, it is convenient to start from a planar drawing of $G(\varphi)$ as described, and proceed as follows.

\begin{figure}[t]
    \centering
        \includegraphics[width=0.3\textwidth,keepaspectratio]{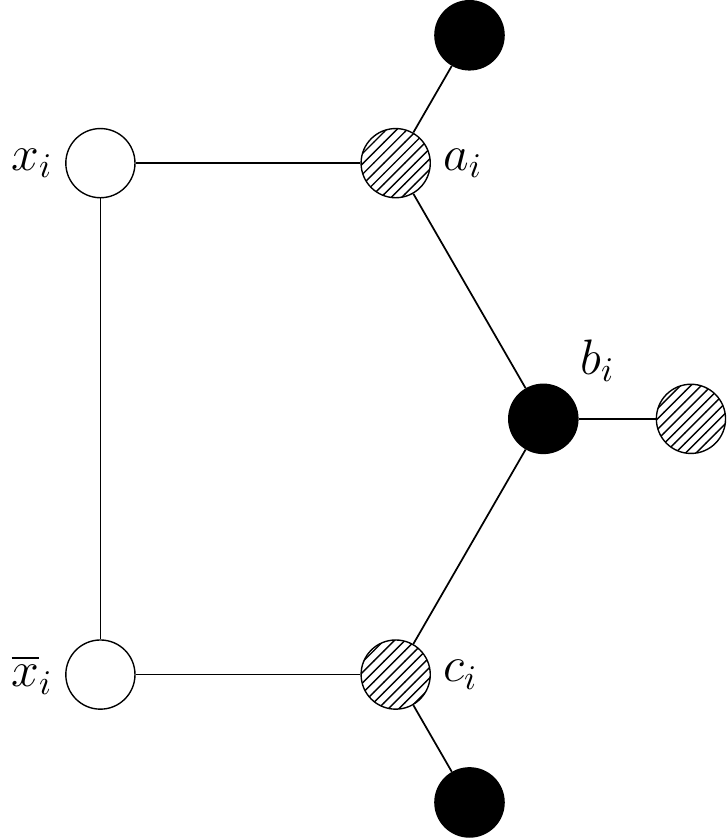} 
    \caption{The variable gadget $X_i$. If $I_1$ and $I_2$ are two disjoint independent dominating sets, then $a_i$ and $c_i$ must both be in either $I_1$ or $I_2$. Furthermore, exactly one of $x_i$ and $\overline{x}_i$ can be in $I_1 \cup I_2$.}
    \label{fig:vargadget}
    \vspace{-0.5cm}
\end{figure}

For each variable $x_i$, we extend its corresponding 2-clique in $G(\varphi)$ by replacing it with the following variable gadget $X_i$ (see Figure~\ref{fig:vargadget}).
Here, $X_i$ is a 5-cycle on the vertices $x_i$, $a_i$, $b_i$, $c_i$, and $\overline{x}_i$ (in clockwise order) with a pendant vertex attached to each of $a_i$, $b_i$, and $c_i$.
Otherwise, we retain the structure of $G(\varphi)$ finishing our construction of $G'$.
Clearly, as $G(\varphi)$ is planar and triangle-free, so is $G'$. 
We will then prove that $\varphi$ is satisfiable if and only if $G'$ contains two disjoint independent dominating sets.

Let $\varphi$ be satisfiable under the truth assignment $\tau = \{0,1\}^n$.
We construct two disjoint independent dominating sets $I_1$ and $I_2$ as follows.
For each $i \in [n]$, if $\tau$ sets $x_i$ to $1$, put $x_i$ to $I_1$.
Otherwise, if $\tau$ sets $x_i$ to $0$, put $\overline{x}_i$ to $I_1$.
Put every $b_i$ to $I_1$, and every $a_i$ and $c_i$ to $I_2$.
The pendant vertices of $a_i$ and $c_i$ are put to $I_1$, while the pendant vertices of $b_i$ are put to $I_2$.
For each $j \in [m]$, put $C_j$ in $I_2$.
Observe that both $I_1$ and $I_2$ are independent.
Moreover, every vertex of $X_i$ is dominated by a vertex in $I_1$, and also by a vertex in $I_2$.
Every vertex $C_j$ is dominated by a vertex in $I_2$, and since $\tau$ is a satisfying assignment, $C_j$ must also be adjacent to a vertex in $I_1$.
We conclude that $I_1$ and $I_2$ are disjoint independent dominating sets of $G'$. 

Conversely, suppose that $I_1$ and $I_2$ are two disjoint independent dominating sets of $G'$.
Clearly, each clause $C_j$ for $j \in [m]$ must be dominated by at least one $x_i$ (or $\overline{x}_i$ in the case of a negative clause).
For each $i \in [n]$, observe that $a_i$ and $c_i$ must both be in $I_1$ or $I_2$ (for otherwise the pendant of $b_i$ could not be dominated by both a vertex of $I_1$ and a vertex of $I_2$). 
It follows that at most one of $x_i$ and $\overline{x}_i$ can be in $I_1 \cup I_2$.
Thus, the vertices in $I_1 \cup I_2$ corresponding to variable vertices encode a satisfying assignment $\tau$ for $\varphi$. 
Finally, notice that neither $x_i$ or $\overline{x}_i$ are in $I_1 \cup I_2$, the truth value of the corresponding variable does not affect the satisfiability of $\varphi$, and can thus be set arbitrarily in $\tau$.
\end{proof}

\section{Hardness results for regular graphs}
\label{sec:regular}

In this section, we consider the complexity of fall coloring regular graphs.
For connected 2-regular graphs (i.e., cycles), it is not difficult to verify that $2 \in \fall(C_n)$ if and only if $2 \mid n$ and that $3 \in \fall(C_n)$ if and only if $3 \mid n$ with no other integer being in $\fall(C_n)$, for any $n$ (see e.g.,~\cite{Dunbar2000,Mitillos2016}).
However, as we will show next, the problem of fall coloring 3-regular graphs is considerably more difficult.

We begin by recalling the following result.
\begin{theorem}[Heggernes and Telle~\cite{Heggernes1998}]
It is $\NP$-complete to decide if the square of a cubic graph is 4-chromatic.
\end{theorem}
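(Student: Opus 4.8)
The plan is to first settle membership, which is immediate: a candidate proper $4$-coloring of $G^2$ is a polynomial-size certificate, and verifying that every two vertices at distance at most two in $G$ receive distinct colors takes polynomial time. Before attacking hardness I would reduce the \emph{question itself} to a cleaner one. If $G$ is simple and cubic and $v$ has neighbours $a,b,c$, then $v,a,b,c$ are pairwise at distance at most two, so they form a $4$-clique in $G^2$; hence $\chi(G^2) \geq 4$ always, and the task is exactly to decide whether $\chi(G^2) = 4$, i.e.\ whether $G^2$ is $4$-\emph{colorable}. In any such coloring each closed neighbourhood $N[v]$ is a $4$-clique meeting all four available colors, so it must be \emph{rainbow}; conversely, a $4$-coloring in which every $N[v]$ is rainbow is proper on $G^2$. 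A short check shows that for cubic $G$ this is precisely a fall $4$-coloring in the sense of this paper, so the real target is the \NP-hardness of deciding whether $4 \in \fall(G)$ for cubic $G$.

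The key to a reduction is the rigidity forced by the rainbow condition, which I would make algebraic by identifying the four colors with the elements of the Klein four-group $\mathbb{Z}_2^2$. In a fall $4$-coloring of a cubic graph, $c(u)\oplus c(v)$ is a nonzero element for every edge $uv$, and the three edges at any vertex carry the three distinct nonzero elements; thus the induced edge labelling is a proper $3$-edge-coloring that is moreover a \emph{tension} (its $\oplus$-sum around every cycle vanishes), and conversely every nowhere-zero $\mathbb{Z}_2^2$-tension lifts back to a fall $4$-coloring. This already tells me that the problem is strictly stronger than $3$-edge-colorability, so the plan is \emph{not} to reduce from edge coloring directly (e.g.\ Leven and Galil~\cite{Leven1983}); it also shows that only the relative ($\oplus$-)relationship between colors is well defined, which makes a constraint source with the same symmetry --- \probThreeSAT in a not-all-equal or $\mathbb{Z}_2$-linear formulation --- the natural starting point.

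Concretely, I would reduce from such a SAT variant using three cubic gadgets. A \emph{variable gadget} is a small cubic fragment forced, in every fall $4$-coloring, into exactly one of two global color states read as the truth value; a \emph{wire} propagates a state along a path while the tension condition keeps its $\oplus$-value constant; and a \emph{clause gadget}, attached to its three incoming literal wires, admits a consistent rainbow $4$-coloring if and only if at least one wire carries the satisfying state. The assembled graph must be cubic and its pieces spaced so that no two vertices meant to be unconstrained land at distance two (which would plant spurious $G^2$-edges); here the graphs $F_4 = K_2 \times K_4$ are useful, since $F_4$ is cubic and, being uniquely fall $4$-colorable by Proposition~\ref{prop:fk-fallkcol}, serves as a rigid ``anchor'' and as padding that restores $3$-regularity without adding colorings.

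The main obstacle, and where essentially all the effort lies, is gadget design under the \emph{square} operation: because correctness is governed by $G^2$ rather than $G$, every interaction between gadgets is mediated by distance-two adjacencies, so the gadgets must be engineered so that their squares meet only in the intended way and so that the forced rainbow colorings realise $\oplus$-consistent truth propagation. Establishing that each gadget has \emph{exactly} the intended set of local fall $4$-colorings and no others, and that these compose correctly across clauses, is the delicate, calculation-heavy part; the converse direction --- assembling a global $4$-coloring of $G^2$ from a satisfying assignment --- is then a routine gluing of the per-gadget colorings.
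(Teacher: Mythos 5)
First, a point of comparison: the paper does not prove this statement at all --- it is imported verbatim from Heggernes and Telle~\cite{Heggernes1998} and used as a black box, so there is no internal proof to measure your attempt against. Your preliminary reformulations are correct and are in fact the same equivalences the paper exploits in the opposite direction (via the quoted theorem that a $k$-regular graph is fall $(k+1)$-colorable if and only if its square is $(k+1)$-chromatic): for cubic $G$ every closed neighbourhood induces a $4$-clique in $G^2$, so $\chi(G^2) \geq 4$, and a proper $4$-coloring of $G^2$ is exactly a $4$-coloring of $G$ in which every closed neighbourhood is rainbow, i.e.\ a fall $4$-coloring of $G$. The translation into nowhere-zero $\mathbb{Z}_2^2$-tensions whose induced edge labelling is a proper $3$-edge-coloring is also sound, and correctly signals that a direct reduction from \probChromIndex cannot work.

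However, what you have written is a plan, not a proof. The entire content of the theorem is the \NP-hardness, and that rests on the variable, wire, and clause gadgets, none of which you construct: you exhibit no concrete gadget, you do not pin down the source problem (``\probThreeSAT in a not-all-equal or $\mathbb{Z}_2$-linear formulation'' names a family of candidates, not an instance of a reduction), and you explicitly defer the verification that each gadget admits exactly the intended set of local fall $4$-colorings and that they compose. This deferred step is precisely where such reductions live or die --- the square operation plants distance-two adjacencies between gadgets that are easy to get wrong, and cubicity leaves no degree slack for attaching literal wires to clause gadgets, so the padding by copies of $F_4$ must be specified and checked rather than invoked. As it stands, the argument establishes membership in \NP and a correct (but already known) reformulation of the question; the hardness claim itself remains unproved.
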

\noindent In addition, we make use of the following fact.
\begin{theorem}[\cite{Mitillos2016}]
A $k$-regular graph $G$ is fall $(k+1)$-colorable if and only if $G^2$ is $(k+1)$-chromatic.
\end{theorem}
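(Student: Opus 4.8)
The plan is to exploit the single structural fact that in the square $G^2$ every closed neighborhood $N[v]$ becomes a clique. Since $G$ is $k$-regular we have $|N[v]| = k+1$ for every $v$, so $G^2$ always contains a clique on $k+1$ vertices and hence $\chi(G^2) \geq k+1$. The whole equivalence rests on the observation that a proper coloring of $G^2$ is precisely a coloring of $G$ that is \emph{rainbow} on every closed neighborhood (all vertices of $N[v]$ receive pairwise distinct colors), because two distinct vertices are adjacent in $G^2$ exactly when they lie together in some $N[v]$. I would state this reformulation first, as it drives both directions.

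For the forward direction I would start from a fall $(k+1)$-coloring $c$ of $G$ and argue it is a proper coloring of $G^2$. Independence of the color classes already settles pairs of adjacent vertices. For two vertices $u,w$ at distance exactly two, fix a common neighbor $v$; I claim the $k$ neighbors of $v$ receive distinct colors. This is where the colorful condition does the work: $v$ must see each of the $k$ colors different from $c(v)$ among its neighbors, yet it has only $k$ neighbors, so a pigeonhole count forces each such color to occur exactly once, giving $c(u) \neq c(w)$. Thus $c$ is a proper $(k+1)$-coloring of $G^2$, and with the clique lower bound we conclude $\chi(G^2) = k+1$.

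For the converse I would take a proper $(k+1)$-coloring $c$ of $G^2$ and verify it is a fall $(k+1)$-coloring of $G$. Restricting to $E(G) \subseteq E(G^2)$ shows each color class is independent in $G$. For colorfulness, note that $N[v]$ is a $(k+1)$-clique of $G^2$, so its $k+1$ vertices receive all $k+1$ colors, one each; hence the $k$ neighbors of $v$ realize exactly the colors distinct from $c(v)$, which is precisely the statement that $v$ is colorful. Domination then comes for free: a vertex of color $j$ lies in the class $V_j$, while any vertex of a different color is colorful and therefore adjacent to a vertex of color $j$, so every class dominates.

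The only genuinely delicate point is the pigeonhole step in the forward direction: the definition of \emph{colorful} guarantees merely at least one neighbor of each other color, and it is the exact arithmetic of $k$-regularity --- $k$ neighbors against $k+1$ available colors --- that upgrades this to every neighborhood being rainbow. Everything else (the $(k+1)$-clique giving $\chi(G^2) \geq k+1$, and domination following from colorfulness together with independence) is routine. I would also be careful to read ``$(k+1)$-chromatic'' as $\chi(G^2) = k+1$ rather than mere $(k+1)$-colorability, though the two coincide here since $G^2$ always contains a $(k+1)$-clique.
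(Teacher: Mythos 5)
Your proof is correct: the paper states this theorem only as a citation to the source reference and gives no proof of its own, but your argument --- that a proper $(k+1)$-coloring of $G^2$ is exactly a coloring of $G$ that is rainbow on every closed neighborhood, with the pigeonhole step ($k$ neighbors versus $k$ required colors) upgrading colorfulness to rainbowness and the $(k+1)$-clique $N[v]$ supplying the lower bound on $\chi(G^2)$ --- is the standard and complete way to establish the equivalence. No gaps.
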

\noindent By combining the two previous theorems, we arrive at the following.
\begin{theorem}
It is $\NP$-complete to decide whether a 3-regular graph $G$ is fall 4-colorable.
\end{theorem}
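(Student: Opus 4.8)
The plan is to recognize that this statement is an essentially immediate corollary of the two preceding theorems, so the work reduces to careful bookkeeping rather than any new construction. First I would settle membership in $\NP$: a candidate fall $4$-coloring is just a map $c \colon V \to [4]$, and one can verify in polynomial time that each of the four color classes is independent and that every vertex is colorful. Since $G$ is $3$-regular, colorfulness for a vertex $v$ in a $4$-coloring simply means that the three neighbors of $v$ receive the three colors distinct from $c(v)$, so the check is local and clearly polynomial.

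For hardness, I would reduce from the problem of deciding whether the square of a cubic graph is $4$-chromatic, which is $\NP$-complete by Heggernes and Telle~\cite{Heggernes1998}. The reduction is the identity map: given a cubic (hence $3$-regular) graph $G$, output $G$ itself as an instance of the fall $4$-coloring problem. Correctness follows directly from the characterization of~\cite{Mitillos2016}, specialized to $k = 3$: a $3$-regular graph $G$ is fall $4$-colorable if and only if $G^2$ is $4$-chromatic. As this equivalence runs in both directions and the reduction is trivially polynomial, $\NP$-hardness is immediate.

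The one point I would be careful about is the precise meaning of ``$4$-chromatic'' versus ``$4$-colorable,'' since the cited hardness result is phrased in terms of the former. Here the distinction is harmless: in $G^2$ the closed neighborhood $N[v]$ of any vertex $v$ forms a clique, because any two neighbors of $v$ are at distance at most $2$ in $G$ and are therefore adjacent in $G^2$. Since $G$ is cubic, this clique has size $4$, so $\chi(G^2) \geq 4$ holds for every cubic $G$; consequently $\chi(G^2) \leq 4$ and $\chi(G^2) = 4$ describe the same decision problem. With this observation recorded, combining the two theorems yields the result. I do not anticipate any genuine obstacle beyond confirming that the two cited statements concern the same graph class (cubic being the same as $3$-regular) and the same quantity, which the clique argument above makes explicit.
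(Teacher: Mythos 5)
Your proposal is correct and matches the paper's argument exactly: the paper also obtains this theorem by combining the Heggernes--Telle hardness result for $4$-chromaticity of squares of cubic graphs with the characterization that a $k$-regular graph is fall $(k+1)$-colorable if and only if its square is $(k+1)$-chromatic. Your additional remark that $\chi(G^2)\geq 4$ always holds for cubic $G$ (so ``$4$-chromatic'' and ``$4$-colorable'' coincide here) is a worthwhile clarification the paper leaves implicit.
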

\noindent The previous result suggests that there may be similar intractable fall-colorability problems for regular graphs of higher degree. With this in mind, we use different constructions for regular graphs, to show that fall coloring $k$-regular graphs for $k > 3$ is $\NP$-complete as well.
\begin{theorem}
\label{thm:reg2k2}
For every $k \geq 3$, it is $\NP$-complete to decide whether a $(2k-2)$-regular graph $G$ is fall $k$-colorable.
\end{theorem}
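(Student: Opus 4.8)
The plan is to give a short, direct reduction from \probChromIndex restricted to $k$-regular graphs, which is $\NP$-complete for every $k \geq 3$ by Leven and Galil~\cite{Leven1983}. The key observation is that the \emph{line graph} realizes exactly the target degree: if $G$ is a simple $k$-regular graph, then $L(G)$ is $(2k-2)$-regular, since an edge $e = uv$ of $G$ is adjacent in $L(G)$ to the $k-1$ other edges at $u$ and the $k-1$ other edges at $v$, and these $2(k-1)$ edges are pairwise distinct because $G$ is simple. Membership in $\NP$ is immediate, as one can verify in polynomial time that a given partition into $k$ classes consists of independent sets each of which dominates. Since the map $G \mapsto L(G)$ is clearly polynomial, the whole argument reduces to proving
\[
 G \text{ is $k$-edge-colorable} \iff L(G) \text{ is fall $k$-colorable.}
\]

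For the forward direction I would start from a proper $k$-edge-coloring of $G$. Since $G$ is $k$-regular, the $k$ edges at every vertex receive all $k$ distinct colors, so each color class is a perfect matching of $G$, i.e., an independent set of $L(G)$. Reading this edge coloring as a vertex coloring of $L(G)$, each vertex $e = uv$ already sees all $k-1$ remaining colors among the $k-1$ edges incident to $u$ alone, so every vertex of $L(G)$ is colorful and we obtain a fall $k$-coloring. I regard this ``colorfulness for free'' step as the conceptual heart of the construction: regularity of $G$ is precisely what turns a proper edge coloring into a \emph{fall} coloring of $L(G)$ without any auxiliary gadgets, which is why this case needs a construction different from those in Section~\ref{sec:planar-bip}.

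For the converse, given a fall $k$-coloring of $L(G)$ I would translate it back. Independence of the color classes in $L(G)$ means no two equally colored edges of $G$ share an endpoint, so the coloring is a proper edge coloring of $G$ using exactly $k$ colors; here the colorful condition is not even needed. This gives $\chi'(G) \leq k$, and since $\chi'(G) \geq \Delta(G) = k$, we conclude $\chi'(G) = k$, completing the equivalence.

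There is no genuine obstacle so much as careful bookkeeping to ensure the two directions invoke exactly the right hypotheses: the forward direction uses $k$-regularity of $G$ twice (to force each class to be a \emph{perfect} matching and to make colorfulness automatic), whereas the backward direction needs only independence of the classes together with $k$-regularity to push $\chi'(G)$ up to $k$. Combining the equivalence with the hardness result of Leven and Galil then yields the theorem for every $k \geq 3$.
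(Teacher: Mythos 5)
Your proposal is correct and follows essentially the same route as the paper: a reduction from \probChromIndex on $k$-regular graphs via the line graph $L(G)$, using $k$-regularity to make every vertex of $L(G)$ colorful in the forward direction and only independence of the color classes in the converse. The bookkeeping you highlight (where regularity is actually used) matches the paper's argument.
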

\begin{proof}
The proof is by a polynomial-time reduction from \probChromIndex, where we assume that $k \geq 3$ and that the input graph $G$ is $k$-regular.
Let $G' = L(G)$, that is, $G'$ is the line graph of $G$.
Because $G$ is $k$-regular, it is straightforward to verify that $G'$ is $(2k-2)$-regular.
We then prove that $G$ admits a proper edge $k$-coloring if and only if $G'$ admits a fall $k$-coloring.

Let $h$ be a proper edge $k$-coloring of $G$.
We construct a vertex-coloring $c'$ of $G'$ as follows.
Let $c'(x_{uv}) = h(uv)$, where $uv \in E(G)$ and $x_{uv}$ is the vertex of $G'$ corresponding to the edge $uv$.
By construction, $x_{uv}$ for every $uv \in E(G)$ is adjacent to precisely the vertices corresponding to the edges adjacent to $u$ and $v$.
Since $h$ is a proper edge-coloring, $h$ has colored these edges differently from $h(uv) = c'(x_{uv})$. Furthermore, all the $k$ edges incident to the same vertex will receive $k$ different colors. As such, the corresponding vertices in $G'$ will all be colorful.
We conclude that $c'$ is a fall $k$-coloring for $G'$.

In the other direction, let $c'$ be a fall $k$-coloring of $G'$.
Consider any $x_{uv}$ of $G'$.
Because $c'$ is a fall $k$-coloring, each neighbor of $x_{uv}$ has received a distinct color.
Again, $x_{uv}$ is adjacent to precisely the vertices that correspond to edges adjacent to $u$ and $v$ in $G$.
Thus, we obtain immediately a proper edge $k$-coloring $h$ from $c'$, concluding the proof.
\end{proof}

We can get a similar result for regular graphs with vertices of odd degree by using the \emph{Cartesian product} of graphs $G$ and $H$, denoted as $G \Box H$. As a reminder, $V(G \Box H) = V(G) \times V(H)$ and $(u_1,v_1)(u_2,v_2) \in E(G \Box H)$ when either $u_1u_2 \in E(G)$ and $v_1 = v_2$ or $u_1 = u_2$ and $v_1v_2 \in E(H)$.

\begin{theorem}
For every $k \geq 3$, it is $\NP$-complete to decide whether a $(2k-1)$-regular graph $G$ is fall $k$-colorable.
\end{theorem}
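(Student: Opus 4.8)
The plan is to reduce from \probChromIndex restricted to $k$-regular graphs, which is $\NP$-complete for every $k \geq 3$ by Leven and Galil~\cite{Leven1983}, exactly as in Theorem~\ref{thm:reg2k2}. Given a $k$-regular instance $G$, I would build the graph $G' = L(G) \Box K_2$, that is, two copies of the line graph joined by a perfect matching between corresponding vertices. Since $G$ is $k$-regular, $L(G)$ is $(2k-2)$-regular (each edge $uv$ of $G$ meets exactly $k-1$ other edges at $u$ and $k-1$ at $v$, and these are disjoint in a simple graph), so taking the Cartesian product with $K_2$ raises every degree by exactly one, making $G'$ precisely $(2k-1)$-regular. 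The construction is clearly polynomial, and membership in $\NP$ is immediate since a candidate coloring can be verified in polynomial time. It then remains to prove that $G$ admits a proper edge $k$-coloring if and only if $G'$ is fall $k$-colorable.

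For the forward direction, I would first record the general fact that fall $k$-colorability is preserved under taking the Cartesian product with $K_2$: if $H$ has a fall $k$-coloring $c$, then coloring one $K_2$-layer by $c$ and the other by $\sigma \circ c$, where $\sigma$ is a fixed-point-free permutation of $[k]$ (for instance the cyclic shift $i \mapsto (i \bmod k) + 1$, which exists since $k \geq 2$), yields a fall $k$-coloring of $H \Box K_2$. Both layers remain properly colored because $\sigma$ is a bijection, each matching edge is properly colored because $\sigma$ has no fixed point, and every vertex stays colorful: a vertex already sees all $k-1$ foreign colors within its own layer, so its single matching neighbor cannot spoil this. Applying this with $H = L(G)$, which is fall $k$-colorable whenever $G$ is edge $k$-colorable by the argument of Theorem~\ref{thm:reg2k2}, produces a fall $k$-coloring of $G'$.

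The converse is the easy direction and does not even use the fall property in full strength. Any fall $k$-coloring of $G'$ is in particular a \emph{proper} $k$-coloring, and restricting it to a single $K_2$-layer -- an induced copy of $L(G)$ -- gives a proper $k$-coloring of $L(G)$, which is by definition a proper edge $k$-coloring of $G$. Combining the two directions closes the reduction. I expect the only delicate point to be the verification of the colorful condition in the forward direction, namely checking that shifting the colors of the second copy by a derangement simultaneously keeps the matching edges proper and preserves colorfulness on both layers; once this general $\Box K_2$ lemma is in place, everything else follows directly from Theorem~\ref{thm:reg2k2} and the standard identification of vertex colorings of $L(G)$ with edge colorings of $G$.
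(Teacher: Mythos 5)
Your proposal is correct and takes essentially the same route as the paper: the paper likewise reduces from \probChromIndex on $k$-regular graphs and forms $G'' = L(G) \Box K_2$, invoking (with a citation to~\cite{Mitillos2016} rather than a proof) the fact that $\fall(G' \Box K_2) = \fall(G')$. Your derangement argument for the forward direction and your restriction-to-a-layer argument for the converse correctly and self-containedly supply the details that the paper outsources to that reference.
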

\begin{proof}
It suffices to modify the graph $G'$ from the proof of Theorem~\ref{thm:reg2k2} to obtain a graph with mostly the same structure; in particular, a graph which can be fall $k$-colored exactly when $G'$ can, but whose vertices have common degree on more than those of $G'$.
One such construction is $G'' = G' \Box K_2$. It is easy to see that $\fall(G'') = \fall(G')$ (see~\cite{Mitillos2016}), so $G''$ has exactly the properties we require. 
\end{proof}

\section{Further algorithmic consequences}
\label{sec:algo}
In this section, we give further algorithmic consequences of our hardness results.

A popular measure --- especially from an algorithmic viewpoint --- for the ``tree-likeness'' of a graph is captured by the notion of treewidth.
Here, a \emph{tree decomposition} of $G$ is a pair $(T,\{X_i : i\in I\})$ where $X_i \subseteq V$, $i\in I$, and $T$ is a tree with elements of $I$ as nodes such that:
\begin{enumerate}
\item for each edge $uv \in E$, there is an $i\in I$ such that $\{u,v\} 
\subseteq X_i$, and
\item for each vertex $v \in V$, $T[\SB i\in I \SM v\in X_i \SE]$ is a tree with at least one node.
\end{enumerate}
The \emph{width} of a tree decomposition is $\max_{i \in I} |X_i|-1$.
The \emph{treewidth} of $G$, denoted by $\tw(G)$, is the minimum width taken over all tree decompositions of $G$. 

The following result is easy to observe, but we include its proof for completeness.
\begin{theorem}
\label{thm:twpoly}
Let $G$ be a graph of bounded treewidth. The fall set $\fall(G)$ can be determined in polynomial time.
\end{theorem}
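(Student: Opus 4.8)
The plan is to show that fall $k$-colorability is expressible in a way that the standard treewidth machinery can exploit, and then to argue that computing the entire fall set costs only a polynomial factor more than deciding a single value of $k$. First I would observe that for a fixed $k$, the property ``$G$ admits a fall $k$-coloring'' is definable in monadic second-order logic: one quantifies existentially over $k$ vertex sets $V_1,\dots,V_k$ forming a partition, asserts that each $V_i$ is independent (no edge has both endpoints in $V_i$), and asserts that every vertex $v \in V_i$ has, for each $j \neq i$, a neighbour in $V_j$. All of these are first-order conditions over the quantified sets, so the whole statement is an MSO$_1$ (indeed MSO over the incidence or adjacency structure) sentence. By Courcelle's theorem, any fixed MSO-definable property can be decided in linear time on graphs of treewidth bounded by a constant $w$. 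This is exactly the route already flagged in the introduction via \cite{fpt-book,Telle1997}, so I would lean on it rather than rebuild a dynamic program by hand.

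The key remaining point is that $\fall(G)$ is a \emph{set} of integers, not a single yes/no question, so I must bound how many candidate values of $k$ need to be tested and argue each test is polynomial. Here I would use the elementary bound from the introduction: every fall $k$-coloring satisfies $k \le \delta(G)+1 \le |V|$, and trivially $k \ge \chi(G) \ge 1$. Hence $\fall(G) \subseteq \{1,2,\dots,|V|\}$, so there are at most $|V|$ candidate values. For each $k$ in this range I invoke the Courcelle-theorem decision procedure for the corresponding MSO sentence; since $w$ is a fixed constant, each such test runs in time $f(k,w)\cdot |V|$ for some computable $f$, and summing over the at most $|V|$ values of $k$ gives a total running time polynomial in $|V|$. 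Collecting the values of $k$ for which the procedure returns ``yes'' yields $\fall(G)$ exactly.

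One technical subtlety I would address is that the MSO sentence $\varphi_k$ depends on $k$, so the constant hidden in Courcelle's theorem a priori grows with $k$; but since $k$ is bounded by $|V|$ and the sentence $\varphi_k$ has size polynomial in $k$, the dependence stays under control and the overall bound remains polynomial for fixed treewidth $w$. Alternatively, and perhaps more cleanly for a self-contained argument, I could describe a single dynamic program over a tree decomposition of width $w$ whose table at each bag records, for every vertex in the bag, its assigned colour and the set of colours already seen among its neighbours processed so far; the number of such states per bag is bounded by a function of $k$ and $w$, and one processes all relevant $k$ simultaneously. I expect the main obstacle to be stating the running-time bound rigorously enough to confirm it is genuinely polynomial once $w$ is fixed while $k$ ranges up to $|V|$: the temptation is to wave at ``MSO is decidable in linear time,'' but the honest claim requires tracking that the per-$k$ constant does not blow up super-polynomially across the $|V|$ tests. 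Given that the paper states this result is ``easy to observe,'' I would keep the exposition short, cite Courcelle's theorem together with the $\psi_{fall}(G) \le \delta(G)+1$ bound, and present the loop over $k \in \{1,\dots,|V|\}$ as the whole of the argument.
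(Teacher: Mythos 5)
There is a genuine gap, and you came close to naming it yourself. Your loop runs over all $k \in \{1,\dots,|V|\}$ and invokes Courcelle's theorem for each $\varphi_k$, but the running time guaranteed by Courcelle's theorem is $f(|\varphi|,w)\cdot|V|$ where $f$ is in general a non-elementary (tower-of-exponentials) function of the formula size. Since $\varphi_k$ has size growing with $k$, and you allow $k$ to be as large as $|V|$, the claim that ``the dependence stays under control and the overall bound remains polynomial'' is unjustified --- and the same problem afflicts your fallback dynamic program, whose state space (a colour plus a subset of seen colours per bag vertex) is of size roughly $(k\cdot 2^{k})^{w+1}$, which is exponential in $n$ when $k=\Theta(n)$. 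You cannot wave this away; as stated, the argument does not yield a polynomial bound.

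The missing idea is the one the paper leads with, and you actually had the ingredient in hand before discarding it: every graph of treewidth at most $p$ has a vertex of degree at most $p$, so $\delta(G)\leq p$ and hence $\psi_{fall}(G)\leq\delta(G)+1\leq p+1$. You wrote $k\leq\delta(G)+1\leq|V|$ and then worked with the weaker bound $|V|$; instead, keep $\delta(G)+1$ and observe that it is bounded by the \emph{constant} $p+1$. Then only the constantly many values $k\in\{1,\dots,p+1\}$ need to be tested, each via a fixed MSO sentence (or the algorithm of Telle and Proskurowski), and every hidden constant depends only on $p$. With that one-line substitution your argument matches the paper's proof; without it, the polynomial bound does not follow.
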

\begin{proof}
It is well-known that every graph of treewidth at most $p$ has a vertex of degree at most $p$.
It follows that the largest integer in $\fall(G)$ is $\psi_{fall}(G) \leq p + 1$.
Thus, it suffices to test whether $i \in \fall(G)$ for $i \in \{1,2,\ldots,p+1\}$.
Furthermore, the fall $i$-colorability of $G$ can be tested in polynomial time by the result of Telle and Proskurowski~\cite{Telle1997}.
(Alternatively, this can be seen by observing that fall $i$-colorability can be characterized in monadic second order logic, and then applying the result of Courcelle~\cite{Courcelle1990}).
The claim follows.
\end{proof}
At this point, it will be useful to recall that  a \emph{parameterized problem} $I$ is a pair $(x,k)$, where $x$ is drawn from a fixed, finite alphabet and $k$ is an integer called the \emph{parameter}. 
Then, a \emph{kernel} for $(x,k)$ is a polynomial-time algorithm that returns an instance $(x',k')$ of $I$ such that $(x,k)$ is a YES-instance if and only if $(x',k')$ is a YES-instance, and $|x'| \leq g(k)$, for some computable function $g : \mathbb{N} \to \mathbb{N}$.
If $g(k)$ is a polynomial (exponential) function of $k$, we say that $I$ admits a polynomial (exponential) kernel (for more, see Cygan~{et~al.}~\cite{fpt-book}).

A consequence of Theorem~\ref{thm:twpoly} is that for every $k \geq 1$, \probkFallCol admits an exponential kernel.
Here, we will observe that Lemma~\ref{lem:3col-to-3fall} actually proves that this is the best possible, i.e., that there is no polynomial kernel under reasonable complexity-theoretic assumptions.

First, the gadget $C_6$ each vertex of $G$ is identified with in Lemma~\ref{lem:3col-to-3fall} has treewidth two.
Second, this identification increases the treewidth of $G$ by only an additive constant.
To make use of these facts, we recall that Bodlaender~{et al.}~\cite{Bodlaender2009} proved that \probThreeCol does not admit a polynomial kernel parameterized by treewidth unless \containment.
At this point, it is clear that the proof of Lemma~\ref{lem:3col-to-3fall} is actually a parameter-preserving transformation (see~\cite[Theorem~15.15]{fpt-book} or~\cite[Section~3]{Bodlaender2011}) guaranteeing $\tw(G') \leq \tw(G) + 2$.
We obtain the following.
\begin{theorem}
\probThreeFallCol parameterized by treewidth does not admit a polynomial kernel unless \containment.
\end{theorem}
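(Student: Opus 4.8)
The plan is to show that the reduction in Lemma~\ref{lem:3col-to-3fall} is in fact a \emph{polynomial parameter transformation} (PPT) with treewidth as the parameter, and then to invoke the known non-existence of a polynomial kernel for \probThreeCol parameterized by treewidth together with the standard transitivity lemma for kernel lower bounds. Recall that if problem $A$ does not admit a polynomial kernel (parameterized by $k_A$) unless \containment, and there is a polynomial-time reduction from $A$ to $B$ producing instances in which the new parameter $k_B$ is bounded by a polynomial in $k_A$, then $B$ also admits no polynomial kernel unless \containment (this is~\cite[Theorem~15.15]{fpt-book}). So the entire argument reduces to verifying the parameter bound for the map $G \mapsto G'$ of Lemma~\ref{lem:3col-to-3fall}.

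First I would recall Bodlaender~{et al.}~\cite{Bodlaender2009}: \probThreeCol parameterized by the treewidth of the input graph admits no polynomial kernel unless \containment. Next I would bound $\tw(G')$ in terms of $\tw(G)$. Take an optimal tree decomposition $(T, \{X_i\})$ of $G$ of width $\tw(G)$. For each subdivision vertex $x_{uv}$, the edge $uv$ lies in some bag $X_i$ containing both $u$ and $v$; I would create a new leaf bag $X_i \cup \{x_{uv}\}$ attached to node $i$, which adds $x_{uv}$ to a single bag alongside its two neighbours. For the $C_6$ gadget identified with a vertex $v$, I note that $C_6$ has treewidth~$2$: its five extra vertices $w_{v1},\dots,w_{v5}$ together with $v$ form a path-like structure, and I can append a small chain of bags (each of size at most three, sharing the vertex $v$ with the rest) hanging off any bag that already contains $v$. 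Threading these local attachments onto the original decomposition yields a valid tree decomposition of $G'$ in which every new bag has size at most $\max\{3, \tw(G)+2\}$, so $\tw(G') \leq \tw(G)+2$.

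With this bound established, the transformation $G \mapsto (G', 3)$ runs in polynomial time (Lemma~\ref{lem:3col-to-3fall}), preserves YES/NO instances by correctness of the reduction, and satisfies $\tw(G') \leq \tw(G) + 2$, which is certainly polynomial (indeed linear) in the source parameter. Hence it is a polynomial parameter transformation from \probThreeCol parameterized by treewidth to \probThreeFallCol parameterized by treewidth. Applying the transitivity of kernel lower bounds under PPTs, a polynomial kernel for \probThreeFallCol parameterized by treewidth would yield one for \probThreeCol parameterized by treewidth, contradicting~\cite{Bodlaender2009} unless \containment. This gives the claimed statement.

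I expect the only genuine technical point to be the explicit verification that identifying each vertex with a copy of $C_6$ and subdividing edges raises treewidth by at most an additive constant; everything else is bookkeeping. The subtlety is to confirm that the attached gadgets can be glued to the decomposition while keeping the shared vertex $v$ (or the pair $u,v$ for subdivisions) inside the relevant bags, so that the connectivity condition~(2) of the tree decomposition is maintained and no bag grows by more than the $+2$ coming from adding a bounded-treewidth gadget on top of an existing bag. Since $C_6$ has treewidth~$2$ and attaches at a single vertex, the additive overhead is exactly $2$, matching the bound $\tw(G') \leq \tw(G)+2$ quoted in the excerpt.
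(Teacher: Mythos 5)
Your proposal is correct and follows essentially the same route as the paper: both arguments observe that the reduction of Lemma~\ref{lem:3col-to-3fall} is a polynomial parameter transformation with $\tw(G') \leq \tw(G) + 2$ (since the $C_6$ gadgets and subdivision vertices can be threaded onto an existing tree decomposition with only constant bag growth), and then combine the kernel lower bound of Bodlaender~et al.~\cite{Bodlaender2009} for \probThreeCol parameterized by treewidth with the standard transfer lemma~\cite[Theorem~15.15]{fpt-book}. Your write-up merely makes the tree-decomposition bookkeeping more explicit than the paper does.
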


A further consequence of Lemma~\ref{lem:3col-to-3fall} is that fall $k$-coloring is difficult algorithmically, even when the number of colors is small and the graph is planar.
To make this more precise, we recall the well-known \emph{exponential time hypothesis} (ETH), which is a conjecture stating that there is a constant $c > 0$ such that \probThreeSAT cannot be solved in time $O(2^{cn})$, where $n$ is the number of variables.
\begin{corollary}
\label{cor:planar-fall-eth}
\probThreeFallCol for planar graphs cannot be solved in time $2^{o(\sqrt{n})}$ unless ETH fails, where $n$ is the number of vertices. However, the problem admits an algorithm running in time $2^{O(\sqrt{n})}$ for planar graphs.
\end{corollary}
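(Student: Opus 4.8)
The plan is to establish the two halves of the statement separately: the lower bound via the linear-size reduction of Lemma~\ref{lem:3col-to-3fall}, and the upper bound via the fact that planar graphs have sublinear treewidth.

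For the lower bound, I would invoke the known ETH-based lower bound for planar 3-coloring: assuming ETH, \probThreeCol restricted to planar graphs cannot be solved in time $2^{o(\sqrt{n})}$, where $n$ is the number of vertices (this follows from the sparsification lemma together with the quadratic blow-up incurred when planarizing the standard reduction from \probThreeSAT; see~\cite{fpt-book}). I would then argue that the transformation of Lemma~\ref{lem:3col-to-3fall}, applied to the bounded-degree planar instances furnished by Corollary~\ref{cor:bip-planar-npc}, is linear in the number of vertices. Indeed, each vertex of $G$ is identified with a copy of $C_6$ (six vertices), and each edge contributes a single subdivision vertex; since a planar graph of maximum degree $4$ satisfies $|E(G)| = O(|V(G)|)$, we obtain $|V(G')| = \Theta(|V(G)|)$, and hence $\sqrt{|V(G')|} = \Theta(\sqrt{|V(G)|})$. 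Consequently, a hypothetical $2^{o(\sqrt{n})}$ algorithm for \probThreeFallCol on planar graphs would, composed with this reduction, solve planar \probThreeCol in time $2^{o(\sqrt{n})}$, contradicting ETH.

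For the upper bound, I would use that every planar graph on $n$ vertices has treewidth $O(\sqrt{n})$ and that a tree decomposition of this width is computable in polynomial time (a consequence of the planar separator theorem; see~\cite{fpt-book}). Here Theorem~\ref{thm:twpoly} is not directly usable, since its appeal to Courcelle's theorem hides a dependence on the treewidth that is far worse than single-exponential; instead I would supply an explicit single-exponential dynamic program over the decomposition. For each vertex in a bag the state records its color in $[3]$ together with the subset of $[3]$ already appearing among its neighbors processed so far; properness is enforced on introduce-edge operations, and when a vertex is forgotten I require its recorded subset to contain both colors distinct from its own, which is exactly the colorfulness condition. The number of states per bag is $(3 \cdot 2^{3})^{O(\tw)} = 2^{O(\tw)}$, so the whole computation runs in time $2^{O(\tw)} \cdot \mathrm{poly}(n)$. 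Substituting $\tw = O(\sqrt{n})$ yields the claimed running time of $2^{O(\sqrt{n})}$.

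The main obstacle I anticipate is bookkeeping rather than conceptual: on the lower-bound side, one must verify that the reduction really is linear (which hinges on using the bounded-degree planar source instances of Corollary~\ref{cor:bip-planar-npc}, so that $|E(G)| = O(|V(G)|)$ and the $\sqrt{n}$ exponent is preserved), and on the upper-bound side one must resist the temptation to quote Theorem~\ref{thm:twpoly} and instead exhibit the single-exponential dynamic program, since only its tight dependence on the treewidth delivers the matching $2^{O(\sqrt{n})}$ bound.
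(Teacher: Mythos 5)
Your proposal is correct and follows essentially the same route as the paper: the lower bound comes from observing that the construction of Lemma~\ref{lem:3col-to-3fall} is linear-size and composing it with the known $2^{o(\sqrt{n})}$ ETH lower bound for planar \probThreeCol, and the upper bound comes from the $O(\sqrt{n})$ treewidth of planar graphs together with a single-exponential dynamic program on a tree decomposition. The only cosmetic difference is that the paper cites the algorithm of van Rooij et al.\ for the single-exponential DP where you sketch one explicitly (and your appeal to bounded degree for linearity is unnecessary, since planarity alone gives $|E(G)| = O(|V(G)|)$), but both points are sound.
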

\begin{proof}
It suffices to observe that the graph $G'$ obtained in the proof of Lemma~\ref{lem:3col-to-3fall} has size linear in the size of the input graph $G$.
The claimed lower bound then follows by a known chain of reductions originating from \probThreeSAT (see e.g.,~\cite[Theorem~14.3]{fpt-book}).

The claimed upper bound follows from combining the single-exponential dynamic programming algorithm on a tree decomposition of van Rooij~{et al.}~\cite{vanRooij2009} with the fact that an $n$-vertex planar graph has treewidth $O(\sqrt{n})$ (for a proof, see~\cite[Theorem~3.17]{Fomin2006}).
\end{proof}

Finally, observe that the naive exponential-time algorithm for deciding whether $k \in \fall(G)$ enumerates all possible $k$-colorings of $V(G)$ and thus requires $k^n n^{O(1)}$ time. A much faster exponential-time algorithm is obtained as follows.
\begin{theorem}
\probkFallCol can be solved in $3^n n^{O(1)}$ time and polynomial space. In exponential space, the time can be improved to $2^n n^{O(1)}$.
\end{theorem}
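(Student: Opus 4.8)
The plan is to reformulate \probkFallCol as a set-partition problem and then \emph{count} its (ordered) solutions by inclusion--exclusion, tracking part sizes with a formal variable so that covers of total size exactly $n$ --- i.e.\ genuine partitions --- can be sieved out. As established above, $k \in \fall(G)$ precisely when $V$ can be partitioned into $k$ independent dominating sets; since an independent dominating set is exactly a maximal independent set, I would work with the family $\mathcal{F} = \SB M \subseteq V \SM M \text{ is a maximal independent set of } G \SE$. The point is that membership in $\mathcal{F}$ is testable in polynomial time (verify that $M$ is both independent and dominating), so $\mathcal{F}$ is accessed only through this oracle and never stored.

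First I would set up the generating function. For $X \subseteq V$ let $A(X;z) = \sum_{F \in \mathcal{F},\, F \subseteq X} z^{|F|}$, a polynomial of degree at most $n$. Writing $P_k$ for the number of ordered $k$-tuples $(F_1,\dots,F_k) \in \mathcal{F}^{k}$ that partition $V$, the key identity is
\[
P_k \;=\; [z^{n}] \sum_{X \subseteq V} (-1)^{\,n-|X|}\, A(X;z)^{k},
\]
which I would justify by a standard inclusion--exclusion over the covered set: the inner quantity $A(X;z)^{k}$ counts, with weight $z^{\sum_i |F_i|}$, all $k$-tuples whose union is contained in $X$, and sieving on $X$ retains exactly those whose union is all of $V$. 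Among such covers, $\sum_i |F_i| \ge n$ holds, with equality if and only if the parts are pairwise disjoint, so extracting the coefficient of $z^{n}$ isolates precisely the partitions. Since $P_k \ge 0$ counts ordered partitions, we conclude $k \in \fall(G)$ if and only if $P_k > 0$.

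Next I would describe the two implementations, which differ only in how the transform $X \mapsto A(X;z)$ is evaluated. For the polynomial-space bound, I would evaluate the outer sum one subset $X$ at a time: for each $X$ compute $A(X;z) = \sum_{F \subseteq X} [F \in \mathcal{F}]\, z^{|F|}$ by enumerating the subsets of $X$, raise it to the $k$-th power by ordinary polynomial multiplication truncated at degree $n$, and accumulate the signed coefficient of $z^{n}$ into a single running integer. Only a constant number of degree-$(\le n)$ polynomials and one accumulator are ever stored, so the space is polynomial, while the running time is $\sum_{X \subseteq V} 2^{|X|}\, n^{O(1)} = 3^{n} n^{O(1)}$. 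To obtain the $2^{n}$ bound at the cost of exponential space, I would instead precompute $A(X;z)$ for all $X$ simultaneously with the fast zeta transform over the subset lattice in $2^{n} n^{O(1)}$ time, after which a single pass over all $X$ evaluates the identity; equivalently, $P_k$ is the $k$-fold subset convolution of the indicator of $\mathcal{F}$ with itself, computable in $2^{n} n^{O(1)}$ time and space by ranked M\"obius inversion.

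The only genuine subtlety --- and the step I would be most careful about --- is enforcing disjointness rather than mere covering: inclusion--exclusion on its own counts $k$-tuples of maximal independent sets that cover $V$ but may overlap, and it is precisely the size-tracking variable $z$, together with the observation ``union $=V$ and total size $=n$ $\Rightarrow$ partition,'' that converts a cover count into a partition count. Everything else is routine bookkeeping: one may assume $k \le \delta(G)+1 \le n$, the membership oracle runs in polynomial time, and all polynomial arithmetic stays within degree $n$ and at most $k \le n$ factors with integer coefficients of $n^{O(1)}$ bit-length, contributing only the $n^{O(1)}$ overhead.
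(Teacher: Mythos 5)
Your proposal is correct and follows essentially the same route as the paper: the paper also reduces \probkFallCol to \probSetPart over the family of independent dominating sets and then invokes the $2^n n^{O(1)}$-time and $3^n n^{O(1)}$-time, polynomial-space set-partition algorithms of Bj\"{o}rklund~et~al.\ as a black box, whereas you inline their inclusion--exclusion/zeta-transform argument with the size-tracking variable. The only difference is one of presentation (self-contained derivation versus citation), and your explicit treatment of $\mathcal{F}$ via a membership oracle is exactly what the cited polynomial-space result requires.
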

\begin{proof}
The claimed algorithms are obtained by reducing the problem to \probSetPart, in which we are given a universe $U = [n]$, a set family $\mathcal{F} \subseteq 2^U$, and an integer $k$.
The goal is to decide whether $U$ admits a partition into $k$ members.

We enumerate all the $2^n$ vertex subsets of the $n$-vertex input graph $G$ and add precisely those to $\mathcal{F}$ that form an independent dominating set, a property decidable in polynomial time.
To finish the proof, we apply the result of Bj\"{o}rklund~et al.~\cite[Thms.~2~and~5]{Koivisto2009} stating that \probSetPart can be solved in $2^n n^{O(1)}$ time. 
Further, if membership in $\mathcal{F}$ can be decided in $n^{O(1)}$ time, then 
\probSetPart can be solved in $3^n n^{O(1)}$ time and $n^{O(1)}$ space.
\end{proof}

\section{Conclusions}
We further studied the problem of partitioning a graph into independent dominating sets, also known as fall coloring.
Despite the centrality of the concepts involved, independence and domination, a complete understanding of the complexity fall coloring is lacking.
Towards this end, our work gives new results and strengthens previously known hardness results on structured graph classes, including various planar graphs, bipartite graphs, and regular graphs.

An interesting direction for future work is finding combinatorial algorithms for fall coloring classes of bounded treewidth (or in fact, bounded cliquewidth).
Indeed, the algorithms following from the proof of Theorem~\ref{thm:twpoly} are not practical.
For concreteness, one could consider outerplanar graphs or cographs.

\bibliographystyle{splncs04}
\bibliography{bibliography}
\end{document}